\pdfoutput=1
\documentclass[12pt,letterpaper]{article}

\usepackage{cite}

\usepackage{hyperref}
\usepackage[small]{titlesec}

\usepackage{geometry} %
\usepackage{microtype}
\usepackage{mathtools}
\mathtoolsset{mathic}
\usepackage{xcolor,latexsym,amscd,amssymb,amsthm,graphicx,subfigure,cite,url}
\usepackage[shortlabels]{enumitem}

\DeclareMathOperator{\polylog}{polylog}

\usepackage{fullpage}

\def\A{\mathcal{A}}
\def\E{\mathcal{E}}
\def\F{\mathcal{F}}

\def\T{\mathcal{T}}

\def\OO{{\cal O}}
\def\VV{{\cal V}}

\let\bd\partial
\let\eps\varepsilon
\def\reals{\mathbb{R}}
\newcommand{\ints}{\mathbb{Z}}

\def\slice{V}

\newtheorem{theorem}{Theorem}[section]
\newtheorem{lemma}[theorem]{Lemma}

\newtheorem{fact}[theorem]{Fact}

\newtheorem{prop}[theorem]{Proposition}

\theoremstyle{remark}
\newtheorem*{remark}{Remark}

\begin{document}

\begin{titlepage}
\title{\Large Eliminating Depth Cycles\\among Triangles in Three Dimensions%
  \thanks{%
    Work on this paper by B.A.\ has been partially supported by NSF
    Grants CCF-11-17336, CCF-12-18791, and CCF-15-40656, and by BSF grant 2014/170.
    Work by M.S.\ has been supported by Grant 2012/229 from the
    U.S.-Israel Binational Science Foundation, by Grants 892/13 and 260/18 from
    the Israel Science Foundation, by the Israeli Centers for Research
    Excellence (I-CORE) program (center no.~4/11), by the Hermann
    Minkowski--MINERVA Center for Geometry at Tel Aviv University, and
    by Grant G-1367-407.6/2016 from the German-Israeli Foundation for Scientific Research and Development.
    An earlier version of this work appeared in \emph{SODA'17} \cite{triangle-cycles-soda}.}
}

\author{%
  Boris Aronov\thanks{%
    Department of Computer Science and Engineering,
    Tandon School of Engineering,
    New York University,
    Brooklyn, NY 11201, USA;
    \textsl{boris.aronov@nyu.edu}.}%
  \and
  Edward Y.\ Miller\thanks{%
    Courant Institute of Mathematical Sciences, 
    New York University,
    New York, NY 10012, USA;
    \textsl{emiller@cims.nyu.edu}.}%
  \and
  Micha Sharir\thanks{%
    Blavatnik School of Computer Science, Tel Aviv University, Tel-Aviv 69978,
    Israel; \textsl{michas@post.tau.ac.il}.} }

\maketitle

\begin{abstract}
  Given $n$ pairwise openly disjoint triangles in
  3-space, their vertical depth relation may contain
  cycles. We show that, for any $\eps>0$, the triangles can be cut
  into $O(n^{3/2+\eps})$ connected semi-algebraic pieces, whose description complexity depends only on the
  choice of $\eps$, such that the depth relation among these pieces is
  now a proper partial order. This bound is nearly tight in the worst
  case.  We are not aware of any previous study of this problem, in this full generality, with a
  subquadratic bound on the number of pieces.

  This work extends the recent study by two of the authors (Aronov, Sharir~2018) on
  eliminating depth cycles among lines in 3-space.  Our approach is
  again algebraic, and makes use of a recent variant of the polynomial
  partitioning technique, due to Guth, which leads to a recursive
  procedure for cutting the triangles.  In contrast to the case of
  lines, our analysis here is considerably more involved, due to the
  two-dimensional nature of the objects being cut, so additional
  tools, from topology and algebra, need to be brought to bear.

  Our result essentially settles a 35-year-old open problem in
  computational geometry, motivated by hidden-surface removal in
  computer graphics.
\end{abstract}
\end{titlepage}

\begin{flushright}
  To Ricky Pollack, a mathematician, an inspiration, a friend,\\who brought us all together.
\end{flushright}

\section{Introduction}
\label{sec:intro}

\paragraph*{The problem.}
Let $\T$ be a collection of $n$ non-vertical relatively open pairwise disjoint triangles in $\reals^3$. 
We~treat the triangles as relatively open, and allow their boundaries to intersect or overlap.
For any pair $\Delta,\Delta'$ of triangles in $\T$, we say that
$\Delta$ passes \emph{above} $\Delta'$ (equivalently, $\Delta'$ passes \emph{below}~$\Delta$) 
if there exists a vertical line that meets both $\Delta$ and $\Delta'$, so that it intersects $\Delta$ at a point that 
lies higher than its intersection with~$\Delta'$; this property is clearly independent of the 
choice of the vertical line meeting both triangles. We denote this relation by~$\Delta'\prec \Delta$ or~$\Delta\succ \Delta'$.
The~relation $\prec$ in general may contain \emph{cycles} of the form
$\Delta_1\prec\Delta_2\prec\cdots\prec\Delta_k\prec\Delta_1$. We call this a
\emph{$k$-cycle}, and refer to $k$ as the \emph{length} of the cycle.
Cycles of length three (the minimum possible length, for any collection of pairwise disjoint convex objects) are called \emph{triangular}; see Figure~\ref{fig:tricyc}.
\begin{figure}[hbpt]
  \centering
  \includegraphics{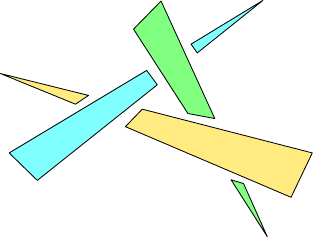}
  \caption{A triangular depth cycle among three triangles.}
  \label{fig:tricyc}
\end{figure}

The problem of \emph{cycle elimination} is to cut the triangles of $\T$ into a finite 
number of connected pieces, each being semi-algebraic of constant description
complexity (that is, defined by polynomial equalities 
and inequalities, with the number of inequalities and their maximum degree bounded by some constant), so that the suitably extended
depth relation among the new pieces is acyclic, in which case we call it a \emph{depth order}. 

The simpler case, with triangles replaced by lines or 
line segments, has been handled in the recent companion paper \cite{ArS1dcg}.
We refer the reader to that paper for a more detailed review of previous work.
An additional review of more recent progress and related results is given later in this paper.  
Note that eliminating cycles in a set of triangles adds, literally, a new dimension to the problem:
whereas lines (or segments, or curves) are cut at a discrete set of points, triangles have to be cut
into pieces along curves,
which makes the analysis considerably more involved.
We also observe that the binary space partition~(BSP) technique of Paterson and Yao \cite{PY} 
constructs a depth order by cutting the triangles into $\Theta(n^2)$ pieces,\footnote{%
  A significant feature of the BSP technique is that the cuts are made by straight lines and therefore the resulting pieces can be taken to be triangular, whereas this is not the case in our construction.  Moreover, the resulting pieces have no depth cycles with respect to \emph{any} viewing direction, and also for \emph{any perspective} view.}
but, as in the case of lines, we would like to use fewer cuts, ideally close to
the lower bound of $\Omega(n^{3/2})$, which is an immediate extension of a
similar lower bound in \cite{CEG+} for the case of lines.

A long-standing conjecture for the case of lines, open since 1980, has been that one can indeed always
construct a depth order by cutting the lines into a \emph{subquadratic} number of pieces.
See \cite[Chapter~9]{MdB-lncs} for a summary of the state of affairs circa 1990.
In the previous work \cite{ArS1dcg} we have shown that $O(n^{3/2}\polylog n)$ cuts (and thus pieces produced) suffice
to eliminate all cycles among $n$ lines in space. In this paper we obtain a similar,
albeit slightly weaker, bound for the case of triangles, settling this conjecture for the case of triangles, in a strong, almost worst-case tight manner.\footnote{%
  Except for the fact that our cuts are not by straight segments~--- see below.}

\paragraph*{Background.}
The main motivation for studying this problem comes from \emph{hidden
  surface removal} in computer graphics, as described, for example, in an  earlier paper of Aronov~et~al. \cite{AKS}.  Briefly, a conceptually simple technique for
rendering a scene in computer graphics is the so-called Painter's
Algorithm, which places the objects in the scene on the screen in a
back-to-front manner, painting each new object over the portions of
earlier objects that it hides.  For this, though, one needs an
\emph{acyclic} depth relation among the objects with respect to the
viewing point (which we assume hereafter, without loss of generality, to lie at
$z=+\infty$), as the algorithm would fail if applied directly to the triangles of Figure~\ref{fig:tricyc}, say. When there are cycles in the depth relation, one would
like to cut the objects into a small number of pieces, so as to
eliminate all cycles (i.e., have an acyclic depth relation among the
resulting pieces), and then paint the pieces in the above manner,
obtaining a correct rendering of the scene; see \cite{AKS,4M-book} for
more details.  Assuming that the input objects are all given as
triangulated polyhedral approximations, as is the case in many
practical applications, we face exactly the problem addressed in this
paper.

In the recent companion work \cite{ArS1dcg}, we essentially settled the case of lines, and thereby also of segments, by showing
that $O(n^{3/2}\polylog n)$ cuts are sufficient to eliminate all cycles, which is close to the best possible bound due to a well known construction requiring $\Omega(n^{3/2})$ cuts \cite{CEG+}. Refer to \cite{ArS1dcg} for more details on the history of the problem.

In contrast, the case of triangles has barely been touched, except for the work in \cite{AKS} just discussed, and for the aforementioned
BSP technique in \cite{PY} and several subsequent refinements, where improved 
(subquadratic) bounds were established for several special classes of objects in three dimensions, such as axis-parallel two-dimensional rectangles of bounded aspect ratio \cite{AGMV,T-BSP-fat} or so-called uncluttered scenes \cite{MdB-BSP-uncluttered}; see \cite{AT-BSP-survey,T-BSP-fat,T-BSP-MSR-survey} for surveys of the BSP literature.

\paragraph*{Our contribution.}
In this paper we essentially settle the problem for the case of triangles, and show that \emph{all} cycles
in the depth relation in a set of $n$ pairwise disjoint, relatively open triangles can be eliminated by cutting the triangles into $O(n^{3/2+\eps})$ 
connected pieces, for any choice of $\eps>0$, where the constant of proportionality depends on $\eps$,
and increases as $\eps$ tends to $0$.  The description complexity of the resulting pieces depends only on the choice of $\eps$.
As noted, our bound is best possible in the worst case, up to the $O(n^\eps)$ factor. 

The proof of this bound follows the high-level approach in the previous analysis
for the case of lines \cite{ArS1dcg}, which uses the polynomial partitioning technique
of Guth \cite{Gut}. Roughly speaking, this technique spreads the \emph{edges} of the 
triangles more or less evenly among the cells of the partition, which in turn provides a recursive 
divide-and-conquer mechanism for performing the cuts. However, the fact that we are dealing here with two-dimensional
triangles, rather than with one-dimensional lines (or segments), raises
substantial technical problems that need to be overcome.
The two most significant issues that arise are:

\begin{enumerate}[(i),wide,labelindent=0pt]
\item In contrast with the case of lines, where the cuts are made at a discrete set of points, 
here we need to cut the triangles into two-dimensional regions. Ideally, we would like to cut 
them into triangular pieces (as does the BSP technique of \cite{PY}), but our approach does not
achieve this and instead cuts the triangles by a collection of high-but-constant-degree algebraic 
curves into semialgebraic regions, which can then be cut into asymptotically the same number of 
`trapezoidal' regions of constant description complexity. 

\item Additional complications arise in controlling the recursive mechanism,
to ensure that not too many triangles are passed to a recursive subproblem (each within some cell of the partition).
As alluded to above, we can control the number of triangles that have an edge  crossing
a cell, since the partition is based on the triangle edges, but we do not have a good bound
on the number of triangles that ``fully slice'' through a cell; see the precise description below, and refer to Figure~\ref{fig:pierce} for an illustration.
\begin{figure}
  \centering
  \includegraphics[scale=1.2]{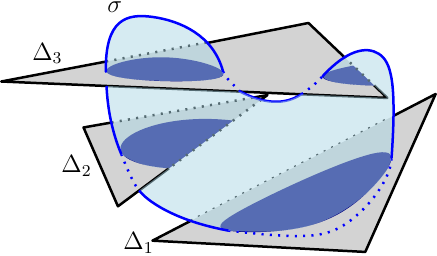}
  \caption{$\Delta_1$ slices the cell $\sigma$.  $\Delta_2$ pierces $\sigma$.  $\Delta_3$ pierces $\sigma$, even though only one connected component of $\Delta_3 \cap \sigma$ meets the boundary of $\Delta_3$.}
  \label{fig:pierce}
\end{figure}
One therefore needs to prune away the triangles that cross a cell in this slicing manner, 
in order to obtain a recurrence relationship similar
to the one in \cite{ArS1dcg} for the case of lines, thereby achieving the desired near-optimal
bound for the overall number of cuts.
\end{enumerate}

As in the case of lines, our proof is constructive, and leads, in principle, to an 
efficient algorithm for performing the cuts (assuming a suitable, by now standard, model of algebraic 
computation). The only ingredient that was missing --- an effective and efficient construction of 
Guth's partitioning polynomial --- has been resolved in \cite{AgArEzZ-poly}; see the discussion in Section~\ref{sec:discussion}.

\paragraph*{Very recent related work.}
We defer the discussion of extensions of our methods, related results, and research done after 
the initial version \cite{triangle-cycles-soda} of this work, to Section~\ref{sec:discussion}.
One significant later development of this kind is due to De Berg~\cite{mdb}, who presented a 
technique for eliminating all cycles in a set of $n$ triangles using only straight cuts, 
but the bound on the number of pieces produced by his technique is larger, about $O(n^{7/4})$.
Another step forward was the realization that Guth's partitioning theorem (see Proposition~\ref{prop:gut} below) can be made effective \cite{AgArEzZ-poly}.

\section{Eliminating cycles in a set of triangles}
\label{sec:tri}

\paragraph*{The setup and some notation.}
Let $\OO$ be a collection of pairwise disjoint \emph{objects} in three dimensions, 
where each object is a relatively open path-connected subset of a non-vertical plane;\footnote{%
 We assume, for simplicity of presentation, that no triangle lies in a vertical plane to avoid 
 some technicalities that do not essentially affect the rest of our argument.}
in our analysis, these will be the triangles or the 
triangle pieces produced by our construction.\footnote{%
  We choose to treat the objects as relatively open in order to make the vertical relation 
  unambiguous in situations where two objects touch, such as two triangles sharing an edge.
  In addition, this allows us to consistently treat portions into which a triangle is 
  cut as properly disjoint objects.}
Clearly, each object in~$\OO$ is 
\emph{$xy$-monotone}, that is, its intersection with any vertical line is a single point 
or empty.  Extending the definition given in the introduction, we define a \emph{depth relation} $(\OO,\prec)$ on the objects of~$\OO$, in the 
following natural manner: we say that $o_1 \in \OO$ \emph{lies} (or \emph{passes) below} 
$o_2 \in \OO$ (in which case we also say that \emph{$o_2$ passes above $o_1$}), and write 
$o_1 \prec o_2$ or $o_2 \succ o_1$, if there exists a~vertical line $\ell$ that meets both 
$o_1$ and $o_2$, and the $z$-coordinate of its intersection with $o_1$ is smaller than that 
of its intersection with $o_2$. For general planar connected regions, this relation 
need not be well behaved, but, for portions of pairwise disjoint triangles,
the relation is well defined, in the sense that it is independent of the choice of the line~$\ell$.
  
The final pieces into which the triangles of $\T$ will be cut will have 
\emph{constant description complexity}, as defined above.
However, until the very end of the construction, we will only generate certain constant-degree 
algebraic curves (or rather arcs)
that are drawn on the respective triangles.  Only upon termination of the process, we will use these
curves to construct the output collection of constant-complexity pieces with the desired properties. 

A \emph{cycle} in $(\OO,\prec)$ is a circular sequence of some $k$ objects from $\OO$ that satisfy 
$o_1\prec o_2 \prec \dots \prec o_k \prec o_1$. We refer to $k$ as the \emph{length} of the cycle; 
a cycle of length~$k$ is a \emph{$k$-cycle}. Note that self-loops and 2-cycles are not possible 
in $\OO$ under our assumptions (although they may very well exist for more general objects, already for algebraic arcs), 
so we must have $k\ge 3$. 

\paragraph*{The problem, restated.}
We are now ready to formally state the problem: 
Let $\T$ be a collection of $n$ non-vertical pairwise disjoint relatively open triangles in~$\reals^3$. 
As already mentioned above and illustrated in Figure~\ref{fig:tricyc}, $(\T,\prec)$ may 
contain cycles. Our goal is to cut the triangles of $\T$ into a small number of (relatively open) path-connected 
pieces of constant description complexity, so that, for the collection $\OO$ of the resulting
pieces, $(\OO,\prec)$ is acyclic---a \emph{depth order}.

A straightforward way of achieving this is to project all triangles of $\T$ orthogonally to the $xy$-plane and form the resulting arrangement of triangles, which consists of at most 
$O(n^2)$ faces.  Extrude each face of this arrangement into an unbounded $z$-vertical prism, cut each 
triangle $\Delta\in\T$ into pieces along the polygonal curve of its intersection with the prism boundary, and repeat this procedure for each prism.
It is easy to see that the number of resulting pieces is $O(n^3)$, a bound tight in the worst case for this specific construction, and that the 
pieces corresponding to a single prism form a linear order under $\prec$, while pieces from
different prisms are unrelated by $\prec$, so indeed there are no cycles.  It is moreover easy 
to refine this decomposition so that the resulting pieces are triangles, with no asymptotic 
increase in the number of pieces.

The cubic number of pieces obtained by this na\"ive approach is way too excessive.
A better bound on the number of pieces, sufficient to eliminate all cycles, is provided by the 
\emph{binary space partition}~(BSP) technique of Paterson and Yao \cite{PY}, which eliminates 
all cycles by cutting the triangles into $\Theta(n^2)$ (triangular) pieces. In fact, as already mentioned, the
construction in \cite{PY} has a much stronger property: the resulting collection of triangular
pieces has no cycles in the depth relation corresponding to \emph{any} viewing direction, or, 
more generally, to the perspective view from any point.

In this paper we show that cycles in the depth relation for a \emph{fixed} viewing direction 
(here, the view from $z=+\infty$) can be eliminated by creating a significantly subquadratic 
number of pieces, while keeping the complexity of each piece constant.
As already mentioned, the number of pieces that our technique yields, which is $O(n^{3/2+\eps})$, 
for any prespecified $\eps>0$, is nearly tight in the worst case.
The complexity of the pieces into which we cut our triangles depends \emph{only} on $\eps$.

We will cut the triangles by drawing curves on each of them;
this will be performed in a~hierarchical manner, by a recursive procedure.  
For each triangle $\Delta$, the curves drawn on $\Delta$ form a planar arrangement within
$\Delta$, and the overall collection of faces of these arrangements, over all $\Delta\in\T$,
will have an acyclic depth relation. 
In general, though, these faces need not have constant complexity, so a final step breaks them
into subfaces that do have constant complexity, without affecting the asymptotic bound on 
the number of pieces. The actual procedure that produces these curves is somewhat more
involved, and will be described in detail below.

Ideally, we would like the curves to be straight and the pieces to be triangular, 
as yielded by the BSP technique \cite{PY}. The very recent work of De~Berg \cite{mdb}, 
which is based on our earlier work~\cite{ArS1dcg}, also cuts the triangles by straight
segments, but it produces a larger number of pieces, with a bound close to $O(n^{7/4})$
(see Section~\ref{sec:discussion} for more details). Unfortunately, our argument 
cannot achieve this straightness property. 

So let $\OO$ denote the collection of faces of the arrangements induced by the 
curves drawn on the triangles of $\T$. Let $C$ be a cycle 
$o_1\prec o_2\prec\cdots\prec o_k\prec o_1$ in~$\OO$, with~$k\ge 3$.
We associate with $C$ a continuum $\Pi(C)=\Pi(C,\OO)$ of closed paths 
(to which we refer as \emph{loops}), where, informally, 
each loop $\pi$ in $\Pi(C)$ traces the cycle along the objects.
Formally, each such $\pi$ is defined in terms of $k$ vertical lines $\ell_1,\ldots,\ell_k$,
such that, for each $i$, $\ell_i$ intersects both $o_i$ and $o_{i+1}$ (where 
addition of indices is $\bmod\,k$), at respective points
$v_i^-$, $v_{i+1}^+$, so that $v_i^-$ lies below $v_{i+1}^+$. 
For each $i$, we connect the two points $v_i^+, v_i^- \in o_i$ 
by a Jordan arc $\pi_i \subset o_i$. The loop $\pi$ is then the cyclic concatenation
\begin{equation} \label{eq:pi}
\pi = \pi_1 \parallel v_1^-v_2^+ \parallel\pi_2 \parallel v_2^-v_3^+ \parallel 
\cdots \parallel v_{k-1}^-v_k^+ \parallel \pi_k \parallel v_k^-v_1^+,
\end{equation}
which is an alternation between the arcs $\pi_i$ along the objects, and the upward vertical \emph{jumps}~$v_i^-v_{i+1}^+$ between them. 
As already said, there is a continuum of possible loops, representing different choices of the 
vertical lines (and thus points) at which we decide to jump from object to object, and 
of the paths along which the ``landing'' and ``take-off'' points are connected along 
each object.\footnote{%
  This is in stark contrast to the case of lines, studied in \cite{ArS1dcg}, where each 
  cycle corresponds to a \emph{unique} path of this kind, as long as the lines are in general position.}

To eliminate all cycles, it suffices to cut all the associated loops, in a manner made precise in the following easy lemma.
\begin{lemma} \label{cutpaths}
  For each $\Delta\in\T$, let $\Gamma_\Delta$ be a finite collection of algebraic curves drawn on~$\Delta$,
  and let $\OO_\Delta$ denote the collection of relatively open two-dimensional faces of $\A(\Gamma_\Delta)$;
  put~$\Gamma\coloneqq \bigcup_\Delta \Gamma_\Delta$ and $\OO\coloneqq \bigcup_\Delta \OO_\Delta$. Then,
  to verify that the depth relation among the pieces in $\OO$ is acyclic, it is 
  sufficient to ensure that, for each cycle $C$ in $(\T,\prec)$, and for each loop~$\pi\in\Pi(C,\T)$, 
  one of the (closed) arcs $\pi_i$ of $\pi$ has been cut by a curve in $\Gamma$.  Such cut may be performed along a subarc of some $\pi_i$.
\end{lemma}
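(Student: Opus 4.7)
The plan is to argue the contrapositive: assuming $(\OO,\prec)$ contains a cycle, I would exhibit a cycle $C$ in $(\T,\prec)$ together with a path $\pi\in\Pi(C,\T)$ none of whose arcs $\pi_i$ meets a curve of $\Gamma$. This directly negates the hypothesis of the lemma.

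I would start from a cycle $o_1\prec o_2\prec\cdots\prec o_m\prec o_1$ in $(\OO,\prec)$, with $m\ge 3$, and let $\Delta_i\in\T$ be the triangle containing $o_i$. The first step is the observation that two distinct relatively open faces of $\A(\Gamma_\Delta)$ lying on a common non-vertical triangle $\Delta$ are disjoint and coplanar, so no vertical line meets both of them; in particular they cannot be related by $\prec$. Hence $\Delta_i\ne\Delta_{i+1}$ for every $i$, and the closed sequence $\Delta_1\prec\Delta_2\prec\cdots\prec\Delta_m\prec\Delta_1$ is a bona fide cycle $C$ in $(\T,\prec)$ (the length of $C$ being $m\ge 3$, with triangles possibly repeated non-consecutively, which is fine).

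Next, for each $i$ I would extract a vertical line $\ell_i$ witnessing $o_i\prec o_{i+1}$, and set $v_i^+:=\ell_i\cap o_i\subset\Delta_i$ and $v_{i+1}^-:=\ell_i\cap o_{i+1}\subset\Delta_{i+1}$, so that $v_i^+$ lies vertically below $v_{i+1}^-$. These points already supply the upward vertical jumps $v_i^+v_{i+1}^-$ demanded by the definition of a member of $\Pi(C,\T)$. To complete $\pi$ I would connect $v_i^-$ to $v_i^+$ inside the single face $o_i$ by a Jordan arc $\pi_i$; this is possible because each relatively open two-dimensional face of the arrangement $\A(\Gamma_{\Delta_i})$ on $\Delta_i$ is path-connected.

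It remains to verify that no such $\pi_i$ is cut by any curve of $\Gamma$. Since $\pi_i\subset o_i$ and $o_i$ is a relatively open cell of $\A(\Gamma_{\Delta_i})$, the arc $\pi_i$ avoids every curve of $\Gamma_{\Delta_i}$ by construction. For every other triangle $\Delta\ne\Delta_i$, the curves of $\Gamma_\Delta$ lie on $\Delta$, which is disjoint from $\Delta_i$ by the pairwise-disjointness assumption on $\T$, so they too miss $\pi_i$. Assembling the $\pi_i$ and the vertical jumps via~\eqref{eq:pi} produces the desired $\pi\in\Pi(C,\T)$, contradicting the hypothesis. The argument is essentially an unpacking of definitions; the only point requiring any thought is the coplanarity observation that forbids two $\prec$-related faces on the same triangle, and that is precisely what permits a clean lift of the $\OO$-cycle to a $\T$-cycle without having to merge consecutive faces.
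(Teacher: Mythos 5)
Your argument is correct and follows essentially the same route as the paper's proof: assume a cycle survives in $(\OO,\prec)$, lift it to a cycle in $(\T,\prec)$ via the containing triangles, and exhibit a witness path in $\Pi(C,\T)$ whose on-triangle arcs lie inside relatively open faces and hence avoid every curve of $\Gamma$, contradicting the hypothesis. Your explicit construction of the path and the observation that consecutive pieces must lie on distinct triangles simply spell out details the paper leaves implicit.
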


\begin{remark} 
  Notice that we require that \emph{all} loops in $\Pi(C,\T)$
  be cut.  Indeed, for a specific loop~$\pi$, a subpath $\pi_i$ may be cut in
  such a way that it first leaves and then reenters the same face of~$\A(\Gamma_\Delta)$.  This by itself does not eliminate $C$, as $\pi_i$ can be replaced by a rerouted subpath~$\pi'_i$ that stays within the same face. However, replacing $\pi_i$ by $\pi'_i$ in $\pi$ produces a
  \emph{different} loop in~$\Pi(C,\T)$, which we also require to be
  cut.   The following proof handles this issue appropriately.
\end{remark}

\begin{proof}
  We proceed by contradiction: Assume that all loops
  in~$\Pi(C,\T)$, for every cycle~$C$ in~$(\T,\prec)$, have been cut,
  but nonetheless there remains a cycle
  $C' \colon o_1\prec o_2 \prec o_3 \prec \cdots \prec o_k \prec o_1$
  in~$(\OO,\prec)$.  In this case the set $\Pi(C',\OO)$ of loops realizing $C'$
  is non-empty, and we pick a loop $\pi\in \Pi(C',\OO)$, having
  the form \eqref{eq:pi}, where each subpath $\pi_i$ is contained
  in the corresponding piece~$o_i\in\OO$, and each vertical
  jump~$v_{i}^-v_{i+1}^+$
  moves from~$o_{i}$ to~$o_{i+1}$.  Each~$o_i$ is contained in some (not necessarily distinct) triangle
  $\Delta_i \in \T$, so $\pi_i$ is fully contained in $\Delta_i$,
  the jump $v_{i}^-v_{i+1}^+$ can be viewed as a vertical jump from $\Delta_i$
  to $\Delta_{i+1}$, and therefore
  $\Delta_1 \prec \Delta_2 \prec \cdots \prec \Delta_k \prec \Delta_1$ is a
  cycle in $(\T,\prec)$ with a witness loop $\pi$ that has not been cut, contradicting our assumption.
\end{proof}

\paragraph*{The polynomial partitioning.}
For a non-zero polynomial $f \in \reals[x,y,z]$,
of degree at most $D$, we let $Z(f)\coloneqq \{(x,y,z) \mid f(x,y,z)=0 \}$ denote its zero set. 
Removing $Z(f)$ from~$\reals^3$ creates $O(D^3)$ open connected \emph{cells}\footnote{%
  Note that they are not ``cells'' in the standard topological sense, being open and 
  potentially topologically non-trivial.  We will address this issue below.}
(see, e.g., Warren \cite{War}). The proposition stated below is a special instance, 
tailored to our needs, of the considerably more general result of Guth \cite{Gut}, 
which extends the earlier polynomial partitioning theorem of Guth and Katz \cite{GK2}.
\begin{prop}[Guth \cite{Gut}]
  \label{prop:gut}
  Given a set of $N$ lines in $\reals^3$ and an integer $1\leq D \leq \sqrt{N}$,
  there exists a non-zero polynomial $f \in \reals[x,y,z]$ of degree at most $D$, 
  so that each of the $O(D^3)$ cells of~$\reals^3\setminus Z(f)$ intersects 
  at most $cN/D^2$ of the given lines, for some absolute constant $c$.
\end{prop}

The first step of our construction resembles that for the case of lines in \cite{ArS1dcg}.
Specifically, let $\E$ denote the set of the $3n$ edges of the triangles in $\T$.
Let $f$ be a non-zero partitioning polynomial, of sufficiently large but \emph{constant} 
degree $D$, for the $3n$ lines supporting the segments of $\E$, as provided by Proposition~\ref{prop:gut}. 
That is, $\reals^3\setminus Z(f)$ consists of $k=O(D^3)$ open connected cells, each intersected 
by at most $cn/D^2$ (lines supporting) segments of $\E$, for an absolute constant $c>0$.\footnote{The constant $c$ here is three times the constant from Proposition~\ref{prop:gut}.}
We assume, without loss of generality, that $f$ is square-free.

\paragraph*{Preparing for the cycle elimination.}
The general strategy, similar to the one used in \cite{ArS1dcg}, is
to cut the triangles of $\T$ into pieces, using $Z(f)$, for a suitable 
partitioning polynomial $f$, in a manner detailed below, 
and then to recurse within each cell of the partition.

Each recursive step, at some node $\xi$ of the recursion tree, is associated with an open 
connected cell $\sigma_\xi$ and with a subset $\T_\xi^{(p)}\subseteq\T$, constructed as follows.
Let $\zeta$ denote the parent node of $\xi$, and let $\zeta_1={\rm root},\zeta_2,\ldots,\zeta_s=\zeta$
be the proper ancestral nodes of $\xi$, ordered from the root to $\zeta$.
(At the start of the recursion, the root has no proper ancestors; we take $\sigma_{\rm root}$ 
to be the entire 3-space, and $\T_{\rm root}^{(p)} = \T$.)
At each $\zeta_i$, we have constructed a partitioning polynomial~$f_{\zeta_i}$
of degree at most $D$. We put $F_\zeta := \Pi_{i=1}^s f_{\zeta_i}$, and
$\sigma_\xi$ is one of the open connected cells of~$\reals^3\setminus Z(F_\zeta)$,
that is contained in the parent cell~$\sigma_\zeta$.\footnote{%
  By construction, each $\sigma_\xi$ is either fully contained in $\sigma_\zeta$ or disjoint from it.}
As the recursion depth is only $O(\log_D n)$, the degree of $F_\zeta$ is at most
$O(D\log_D n)$.  $\T_\xi^{(p)}$ is the set of all triangles $\Delta\in \T$
that \emph{pierce}~$\sigma_\xi$. These are the triangles that have at least 
one edge intersecting~$\sigma_\xi$.

We process $\xi$ (and $\sigma_\xi$) as follows. Put $n_\xi := |\T_\xi^{(p)}|$.
We construct a partitioning polynomial~$f_\xi$, of degree at most $D$,
for the set of the $3n_\xi$ lines that support the edges of the triangles 
in $\T_\xi^{(p)}$, as in Proposition~\ref{prop:gut}, form the product polynomial
$F_\xi := F_\zeta f_\xi$ (at the root, we just put $F_\xi := f_\xi$), and
collect all the open connected cells of $\reals^3\setminus Z(F_\xi)$ that
are \emph{contained in $\sigma_\xi$}. (As just noted for $\zeta$, any cell of
$\reals^3\setminus Z(F_\xi)$ is either fully contained in $\sigma_\xi$ or is 
disjoint from~it.) We create a child node $\eta$ of $\xi$ for each
such cell $\sigma$, generate a recursive subproblem at $\eta$, and
take $\sigma_\eta$ to be $\sigma$ and $\T_\eta^{(p)}$ to be the set of all
triangles of $\T$ that pierce~$\sigma$. Actually, $\T_\eta^{(p)}$ is also
the set of all the triangles of $\T_\xi^{(p)}$ that pierce
$\sigma$. This follows from the easy observation that if $\Delta$ is
piercing at some node, it must also be piercing at the parent node.
(In counterpositive terms, if $\Delta$ slices the cell of some recursion
node, it also slices the cells of all its descendants.)
We note, though, that algorithmically it is more efficient to extract the piercing triangles at $\eta$ from those that are piercing at its parent $\xi$.

The recursion terminates at nodes $\xi$ for which $n_\xi \le D^2/c$, where $c$
is three times the constant in Proposition~\ref{prop:gut}.

Before branching into the recursive child steps, we draw (that is, generate) curves on 
the triangles of $\T_\xi^{(p)}$ (other triangles that meet $\sigma_\xi$ but do not 
pierce it are ignored in this part of the procedure). 
We generate a small number of constant-degree algebraic curves
that result from certain interactions of $f_\xi$ with these triangles, in a manner
to be detailed below. Each curve that we draw is clipped to within the closure
of $\sigma_\xi$; that is, we only maintain the maximal connected arcs of the
intersection of the curve with that closure. The number of such subarcs and 
their pattern of intersection will be examined in Section~\ref{sec:drawing}. 

We now proceed to describe the process in full detail.
As in \cite{ArS1dcg}, define the \emph{level} $\lambda(q)$ of a point~$q\in\reals^3$ 
with respect to~$Z(f)$ (where, as above, $f=f_\xi$) to be the number of intersection 
points of $Z(f)$ with the relatively open downward-directed $z$-vertical ray $\rho_q$ 
emanating from~$q$.\footnote{%
  The polynomials constructed at the proper ancestral nodes of $\xi$ do not matter here,
  since their zero sets are disjoint from $\sigma_\xi$, and in the processing of $\xi$ 
  we only cater to cycle-realizing loops that are fully contained in $\sigma_\xi$.}
Formally, if~$q=(x_0,y_0,z_0)$, we consider the univariate polynomial
$f_0(z) = f(x_0,y_0,z)$, and the level~$\lambda(q)$ of~$q$ is the number of 
real zeros of~$f_0$ in~$(-\infty,z_0)$, counted with multiplicity. 
At points $q$ where the entire vertical line through $q$ is contained in $Z(f)$, 
so that $f_0\equiv 0$, $\lambda(q)$~is undefined.  We will explicitly deal with 
such points in the analysis below.  The number of such lines is~$O(D^2)$ (see below), 
unless $Z(f)$~contains a ``vertical curtain,'' i.e., if $f$~has a factor that does not depend on~$z$.

\subsection{The procedure for cutting the triangles}
\label{sec:cutting}

The procedure is recursive. At each step $\xi$ of the recursion we have the subset $\T_\xi^{(p)}$ 
of the triangles that pierce the corresponding cell $\sigma_\xi$. 
We rename this set as $\T_\xi$, to simplify the notation, and process it as follows. 

Assume first that $|\T_\xi| > D^2/c$, where $c$ is the constant defined above.
We apply the following steps.

\begin{enumerate}[(a),wide,labelindent=0pt]
\item 
We construct a partitioning polynomial $f = f_\xi$, as in Proposition~\ref{prop:gut}, 
for (the lines supporting) the edges of the triangles of~$\T_\xi$; the degree~$D=D(\eps)$ 
of~$f$ is a sufficiently large constant that depends only on the prespecified $\eps$. 
The same value of $D$ is used at all levels of recursion. We put $F = F_\xi = F_\zeta f$, 
where $F_\zeta$ is the product polynomial defined above for the parent node $\zeta$,
and $F_{\rm root} = 1$.

\item 
We generate curves on the triangles of~$\T_\xi$, with up to $O(D^h\log^2n)$ curves,
each of degree up to~$O(D^2)$, on each triangle, for a suitable (small) absolute constant $h>0$.

\item 
We recurse within each cell~$\sigma$ of $\reals^3\setminus Z(F)$ that is contained
in $\sigma_\xi$, with the subset of the triangles of $\T$ (or, as noted above, of $\T_\xi$)
that pierce~$\sigma$.
\end{enumerate}

The processing of nodes $\xi$ with $|\T_\xi| \le D^2/c$ consists of applying the BSP
technique of \cite{PY} to $\T_\xi$, and clipping each of the resulting cutting
segments to within $\sigma_\xi$; see below for more details.

We now spell out the details of step~(b). The curves that we draw are of the 
following three types. (Since we only draw curves within $\sigma_\xi$, it 
suffices, in some of the substeps, to consider $f=f_\xi$ rather than $F=F_\xi$. In what follows,
this clipping to within $\sigma_\xi$ is mostly implicit, but the reader should bear in mind that it
does take place wherever applicable.)

\begin{enumerate}[(i),font=\bf,wide,labelindent=0pt]

\item \emph{Traces}:
For each triangle $\Delta\in \T_\xi$, not fully contained in $Z(f)$, we draw~$Z(f)\cap \Delta$ 
on~$\Delta$.  We call this the \emph{trace of~$f$} on~$\Delta$.  It is a curve of degree at most~$D$.

If $\Delta \subset Z(f)$, that is, if the plane $h_\Delta$ supporting $\Delta$ is a component 
of $Z(f)$, we do not draw any such curve on $\Delta$. Note that the traces just defined include 
the non-empty segments~$\Delta'\cap h_\Delta$ for the other 
triangles~$\Delta' \in \T_\xi \setminus \{\Delta\}$, and non-empty portions of 
some of these segments may show up, and will then be drawn, on the respective triangles~$\Delta'$.

\item \emph{Critical shadows}:
We next consider the set of points $p\in Z(f)$ that are either singular or at which 
$Z(f)$ has a~$z$-vertical tangent line. This set is contained in the common zero set 
$S=S(f) \coloneqq Z\left(f,\frac{\partial f}{\partial z}\right)$ of $f$ and 
$\frac{\partial f}{\partial z}$, so, for simplicity, we use $S$ instead. Intuitively, $S$ consists of two parts: 
if $Z(f)$ contains a ``vertical curtain,'' i.e., a set of the form $G \times \reals$, 
for some (maximal one-dimensional) $G \subset \reals^2$ (equivalently, if $f$ has a factor that 
does not depend on $z$), then $G \times \reals \subset S$, and the 
remainder of $S$ is at most one-dimensional.

Let $H$ denote the vertical curtain spanned by $S$, namely, the union of all 
vertical lines that pass through points of $S$. Since $S$ is an algebraic variety 
of degree $O(D^2)$ (and the only two-dimensional portions of $S$, if any, are vertical curtains), 
$H$~is a two-dimensional variety of the same degree (see, e.g., \cite{Fu84}).\footnote{%
    The equation for $H$ is obtained by eliminating $z$ from the system $f=\frac{\partial f}{\partial z} = 0$.
    Note that $H$ includes all the vertical components of $Z(f)$.}

  We then draw, on each triangle $\Delta\in \T_\xi$, including triangles contained in $Z(f)$, 
  the \emph{critical shadow} curve $H \cap\Delta$.  It is a curve of degree~$O(D^2)$.

\item 
\emph{Wall shadows}:
We eventually want to proceed recursively, within each cell $\sigma$ of the partition, 
but before doing so, we discuss in more detail the notions of piercing and slicing 
triangles, as already defined, to set the stage for the last type of curves that we draw.

Consider the interaction of the triangles $\Delta \not\subset Z(f)$ with the cells of 
the partition. By construction, each cell $\sigma$ meets the edges of at most $O(n/D^2)$ triangles. 
However, $\Delta \in \T_\xi$ may meet $\Theta(D^2)$ cells in the worst case, a consequence of Warren's theorem \cite{War}. 
Therefore, each cell $\sigma$ meets $O(n/D^2)$ triangle edges and, on average, $O(nD^2/D^3) = O(n/D)$ 
triangle interiors; in the worst case, the latter average bound is tight.
Roughly speaking, the latter quantity is too large and yields an unfavorable recurrence, so we have to be more careful:
As in the introduction, and earlier in this section (refer to Figure~\ref{fig:pierce}), we say that a triangle $\Delta$ 
\emph{pierces} $\sigma$ if one or more of its edges intersects (the open cell) $\sigma$, 
and that it \emph{slices} $\sigma$ if $\Delta$~meets~$\sigma$, but its edges do not.
Our plan now is to recurse, for each cell~$\sigma=\sigma_\eta$, where $\eta$ is a child of~$\xi$,
only on the set $\T_\eta = \T_\eta^{(p)}$ of triangles (of $\T$) that pierce it, 
and disregard, for the purposes of recursion, the triangles of $\T_\xi$ that slice it.  
However, this is safe only if the slicing triangles do not participate in any uncut loop 
that represents  a depth cycle and is fully contained within $\sigma$.  We draw additional curves on the 
triangles, as described next, to ensure this. 

We construct the \emph{vertical decomposition} $\VV\coloneqq\VV(Z(F))$ of $3$-space 
produced by $Z(F)$; see Appendix~\ref{sec:vertical-decomp} for details on vertical 
decompositions of the kind we use here, and see, e.g., \cite{SA} for more general information.
As detailed in the Appendix, $F$ is a product of $k := O(\log_D n)$ polynomials, 
each of degree at most $D$, so we actually construct the vertical decomposition 
of the \emph{arrangement}~$\A(\F)$ of the collection~$\F$ of the zero sets of these $k$~polynomials.
We retain only the portion of~$\VV$ within the closure of~$\sigma_\xi$; 
by the properties of vertical decomposition (and since we construct it for $F$ rather than for $f$)
it is easy to verify that each three-dimensional cell of $\VV$ is either
fully contained in $\sigma_\xi$ or disjoint from it.
Let the \emph{2-skeleton} $\VV^{(2)}$ of a vertical decomposition $\VV$ be the complement 
of the union of its open three-dimensional cells.
As just said, in what follows we only consider the portions of $\VV$ and
of $\VV^{(2)}$ within the closure of $\sigma_\xi$; 
we continue to denote these portions as $\VV$ and $\VV^{(2)}$.

We are now ready to draw our third kind of curves, called \emph{wall shadows}, on the triangles of $\T_\xi$.  
Fix $\Delta \in \T_\xi$. For every open three-dimensional cell $\nu$ of $\VV$ meeting $\Delta$, 
we draw the one-dimensional boundary of $\nu \cap \Delta$ on $\Delta$.
It is also possible for the closure $\bar\nu$ of an open three-dimensional cell $\nu$ 
of $\VV$ to meet $\Delta$ without the cell itself meeting $\Delta$. 
This happens when $\Delta\subset Z(f)$ is part of the floor or ceiling of $\nu$. 
In this case, we draw the boundary of the floor/ceiling of $\nu$ on $\Delta$.
Other scenarios where $\Delta$ meets $\bar{\nu}$ but not $\nu$ arise when
$\Delta\cap\bar{\nu}$ is $0$- or $1$-dimensional. In these cases $\nu$ does not cause any curves
to be drawn on $\Delta$ (but neighboring cells will generate such curves).

\begin{remark}
An earlier version of this work \cite{triangle-cycles-soda} devoted significant 
effort to handling the fact that cells arising from a polynomial partitioning 
may have non-trivial topology. Without additional care, we could have, say, a donut-shaped 
cell containing (a loop representing) a cycle formed solely from triangles that 
slice the cell \cite{triangle-cycles-soda}; see Figure~\ref{fig:domino}.
Eliminating such cycles required a considerable portion of the analysis in 
\cite{triangle-cycles-soda}, by a controlled additional subdivision of the triangles,
made to prevent precisely this from happening.  In the current approach, we subdivide all 
triangles (in the present step (iii)) according to their interaction with the vertical decomposition~$\VV$, thereby 
guaranteeing, essentially, that any uncut geometric realization (loop) of a cycle in the depth relation 
has to be contained in a single cell of $\VV$ (which, by construction, has trivial topology), 
or in $\VV^{(2)}$, a situation that is easy to handle and is detailed below. This comes
at the expense of possibly somewhat increasing the number of curves on each triangle (with 
no real asymptotic increase).
\end{remark}

\begin{figure}
    \centering
    \includegraphics[width=0.6\textwidth]{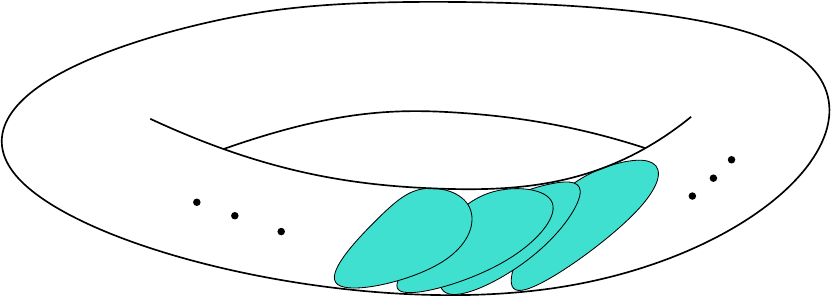}
    \caption{A donut-shaped cell is sliced by several triangles in a pinwheel fashion. 
    These triangles form a depth cycle, so that at least one loop realizing it stays 
    completely within the cell.  Only four intersections of the cell with these triangles 
    are shown, out of many.}
    \label{fig:domino}
\end{figure}

\paragraph{Implications.}
We note that the vertical decomposition $\VV=\VV(Z(F))$ is coarser than the 
\emph{cylindrical algebraic decomposition}~(CAD) \cite{collins75,SS83} of $Z(F)$. 
As described in the Appendix, $\VV$~is based on a collection of curves
drawn on the surfaces of $\F$, which are the intersection curves of pairs of the 
surfaces, and the loci of singular points and points with $z$-vertical tangencies 
on the individual surfaces. There is a total of $O(k^2) = O(\log^2n)$ such curves,
each of degree $O(D^2)$. These curves define the vertical walls
of the first-stage prisms of~$\VV$, and these walls are thus vertical surfaces 
of degree $O(D^2)$ too. The second stage of the decomposition creates additional
\emph{planar} vertical walls, each contained in a vertical plane orthogonal to 
the $x$-axis. The resulting wall shadows that these walls produce on our
triangles are straight segments. It is easy to obtain, from the construction 
of $\VV$, the (somewhat crude) upper bound $O(k^4D^4) = O(D^4\log^4n)$
on the number of these walls, implying a bound of $O(nD^4\log^4n)$ on 
the number of these segments.

To recap, we draw, on each triangle of $\T$, at most $O(\log^2n)$ curves 
of degree $O(D^2)$ and $O(D^4\log^4n)$ straight segments are drawn in step~(iii).

Let $\nu$ be an open three-dimensional cell of $\VV$ and let $\Delta \in \T$ be a triangle 
that slices~$\nu$ (that is, as for the undecomposed cells of $\reals^3\setminus Z(f)$, 
$\Delta$ meets $\nu$, but the boundary of $\Delta$ avoids~$\nu$); note that $\Delta$
may or may not belong to $\T_\xi$. Consider a loop~$\pi$
of the form~\eqref{eq:pi} arising from some depth cycle among portions of the original 
triangles (only loops fully contained in $\sigma_\xi$ need to be considered). 
Recall that such a loop consists of vertical upward jumps alternating with 
connected arcs lying on triangles of~$\T$. We say that $\pi$ \emph{visits~$\Delta$ inside 
$\nu$} if its intersection with~$\Delta\cap\nu$ is non-empty. Notice that $\pi$ necessarily 
arrives at $\Delta$ from a point lying below the plane $\pi_\Delta$ supporting $\Delta$ 
and leaves $\Delta$ to a point above that plane.
(This visit might consist of a single point where some unrelated vertical jump just crosses $\Delta$.)

The following lemma is crucial for the correctness of our cycle-cutting procedure; we defer 
its rather technical topological proof to Appendix~\ref{sec:where-we-prove-it}.  
See the proof for an explanation of why working with bounded cells suffices.
\begin{lemma}
  \label{lem:no-disconnect-really}
  In the above terminology, a loop \(\pi\) fully contained in a bounded open 
  cell~\(\nu\) of~$\VV(Z(F))$ (within $\sigma_\xi$) cannot visit a triangle $\Delta$ that slices $\nu$.
\end{lemma}

Before continuing with the generation of curves to be drawn on the triangles, we make 
the following observations that will be very helpful in proving the correctness of our procedure.

\begin{lemma}
  \label{lemma:jump}
  Consider a loop~$\pi$ of the form \eqref{eq:pi} that is contained in $\sigma_\xi$ and
  represents some cycle $C$ in $\T_\xi$, such that one of its 
  open vertical jump segments meets $Z(f)$. Then $\pi$ is cut.
\end{lemma}

\begin{proof}
Assume, without loss of generality, that the segment in question is $s\coloneqq v_1^-v_2^+$.
We traverse $\pi$ in circular order, starting at $v_1^-$, say. First, assume that 
$s$ is not fully contained in $Z(f)$, so the level $\lambda(\cdot)$ is defined on $s$.
At each point where $s$ intersects $Z(f)$, the level increases as we traverse~$s$ 
(upwards, as $\pi$ does) past this point. To summarize, if $s$ is not contained in $Z(f)$,
at some point of $\pi$ the level goes up.

Since (a)~at every vertical jump $v_i^-v_{i+1}^+$ the level can only stay the same or increase
(or be undefined), but never decrease, and (b)~$\pi$ is a closed loop, 
the level has to come back to its original value or become undefined, 
at some point $q$ on one of the subpaths $\pi_i$ of $\pi$ (including possibly 
an endpoint of such a path). However, $\pi_i \subset  \Delta_i\in\T_\xi$, 
and we cut $\Delta_i$ at each point where $Z(f)$ meets $\Delta_i$ (in step~(i)), 
and along the critical shadow curves (in step~(ii)), which, collectively, are precisely 
the points where such an event may occur, completing the 
argument for the case where the level is defined on $s$.

If $s\subset Z(f)$, the level is undefined on $s$.  On the other hand, $v_1^-$ and $v_2^+$ 
are critical shadow points, and $\pi$ is cut at both of them. A similar argument
applies if $\lambda$ becomes undefined at any other point along $\pi$. 
This completes the proof.
\end{proof}

\begin{lemma}
  \label{lemma:pi-i}
  Consider a depth cycle $C$ among the triangles of $\T_\xi$ and a loop $\pi\in\Pi(C,\T_\xi)$ 
  contained in $\sigma_\xi$, of the form \eqref{eq:pi}, such that one of its subpaths~$\pi_i$ 
  meets $\VV^{(2)}$. 
  Then $\pi$ is cut.
\end{lemma}

\begin{proof}
  By definition of $\pi$, $\pi_i \subset \Delta_i$, for some triangle $\Delta_i \in \T_\xi$. We consider several ways in which $\pi_i$ may meet $\VV^{(2)}$.  We subdivide $\VV^{(2)}$ into the relatively open floors and ceilings of cells of~$\VV$ and the relatively closed vertical walls of these cells; these two sets are not necessarily disjoint.

  If $\pi_i \subset \Delta_i$ intersects the vertical walls, then it is cut at such a point or points by a wall shadow drawn on~$\Delta_i$ in
step~(iii) of our construction.  (Recall that we assumed that there are no vertical triangles in $\T$, so $\Delta_i$ cannot overlap a vertical wall.)
Note that $\pi_i$ may partially or completely overlap a wall shadow, but it is still cut in such cases because the drawn curves are removed in the 
  formation of the final triangle pieces.

  Now suppose $\pi_i \subset \Delta_i$ avoids the relatively closed vertical walls of cells in $\VV$ and therefore only meets the intersection of a
relatively open floor of such a cell with the relatively open ceiling of another cell (lying immediately below the first one).  Recall that 
the non-vertical surfaces of $\VV^{(2)}$ (within the region $\sigma_\xi$ of the current recursive instance)
 are contained in $Z(f)$. 

If $\Delta_i$~is not contained in $Z(f)$ then it is cut in step~(i). In particular, $\pi_i\subset \Delta_i$ must
be cut at a point of~$\Delta_i\cap Z(f)$.

  Otherwise, the plane $h_{\Delta_i}$ supporting $\Delta_i$ is fully contained in $Z(f)$. Then we made no cuts on $\Delta_i$ in step~(i) of our
construction, and $\pi_i$ is fully contained in the relatively open floor of a cell $\nu$ of $\VV$.  Consider following $\pi$ from $\pi_i$ onward.  The
vertical jump $v_i^-v_{i+1}^+$ following $\pi_i$ must enter the open cell $\nu$ and therefore leave $Z(f)$.  In particular,
$\lambda(v_i)<\lambda(v_{i+1}^+)$ (the level cannot be undefined here, as is easily checked, as $v_i^-v_{i+1}^+$ passes through the open cell~$\nu$
of~$\VV$).  As in the preceding proof, since
 the level comes back to its value after a circular traversal of the loop $\pi$, it must go back down or become undefined somewhere along
the loop.  The level can only go up along vertical jumps (unless it is undefined there). It thus follows, as before,
that some of the cuts made in steps~(i) and~(ii) of our construction, which
eliminate precisely those points on the triangles of $\T_\xi$ where the level changes or becomes undefined, will cut $\pi$, as desired.

  Finally, if the level is undefined along an open vertical jump, it's also undefined at its endpoints,
 which lie on triangles of $\T_\xi$, so $\pi$ is also cut at such points.

  Having exhausted all cases, we have completed the proof.
\end{proof}

\item \emph{Recursively constructed curves}:
We finally apply recursion within each cell~$\sigma$ of~$\reals^3\setminus Z(F_\xi)$
contained in $\sigma_\xi$. We generate a child $\eta$ of $\xi$, put
$\sigma_\eta = \sigma$, and pass to this recursive step the subset~$\T^{(p)}_\eta$ 
of all the triangles in $\T$ that pierce~$\sigma$.\footnote{%
  Note that the vertical decomposition $\VV$ is used for drawing curves on the triangles, but \emph{not} for guiding the recursion.  Informally, $\VV$ has in general 
  too many cells to yield a favorable recurrence relation.}
Recall that $\sigma$, as a spatial entity, is also intersected by additional slicing 
triangles, which will not be considered in the recursive subproblem; an implication of
Lemma~\ref{lem:no-disconnect-really} allows us to completely ignore these triangles.
See below for more details.

As in the case of lines in \cite{ArS1dcg}, and as already mentioned, the bottom of 
the recursion is at nodes $\xi$ for which $|\T_\xi^{(p)}| \le D^2/c$, where $c$~is the 
(modified) constant in Proposition~\ref{prop:gut}. For such cells we apply the Paterson-Yao
binary space partitioning \cite{PY}, which cuts the triangles into $O(|\T_\xi^{(p)}|^2)=O(D^4)$
triangular pieces, whose depth relation does not contain cycles, and retain only 
the portions of those pieces within $\sigma_\xi$. Following our strategy, we do not 
really perform the cuts yet, but just add the straight segments (clipping them,
as needed, to within~$\sigma_\xi$) bounding the triangular pieces
to the collections of curves on the triangles.
\end{enumerate}

\subsection{All cycles are eliminated}

Let $\Gamma$ denote the set of all (clipped) curves that have been generated throughout the recursion.
We write $\Gamma$ as the disjoint union $\bigsqcup_\Delta \Gamma_\Delta$, 
where $\Gamma_\Delta$ is the set of curves drawn on~$\Delta$, for each~$\Delta\in\T$.

Consider a node $\xi$ of our recursive procedure, and let $\zeta$ be its parent node
(ignore $\zeta$ when $\xi$ is the root).
Recall that each curve $\gamma\in\Gamma$, constructed when partitioning the cell~$\sigma_\zeta$ 
corresponding to node~$\zeta$, is clipped to within~$\sigma_\zeta$, and may visit several 
subcells of that cell. The forthcoming analysis also covers the case where $\xi$ is the
root, and is only simpler then.

\begin{lemma} \label{tricutok}
The procedure described above eliminates all the depth cycles in $\T$, in the sense that, 
for each cycle \(C\) in $(\T,\prec)$ and for each loop $\pi\in\Pi(C,\T)$ of the form \eqref{eq:pi}, 
at least one of the ``on-triangle'' closed subpaths $\pi_i$ of \(\pi\) meets a curve of \(\Gamma\).
\end{lemma}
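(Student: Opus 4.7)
The plan is to prove Lemma~\ref{tricutok} by induction on the depth of the recursion tree, which is well-founded because each recursive subproblem is strictly smaller than its parent. At a leaf, the Paterson--Yao BSP is applied, which eliminates all cycles within that subproblem, so by Lemma~\ref{cutpaths} the statement holds there. For the inductive step, fix a cycle $C\colon o_1\prec\cdots\prec o_k\prec o_1$ in $(\T,\prec)$ and a path $\pi\in\Pi(C,\T)$ of the form~\eqref{eq:pi}, and assume for contradiction that no arc $\pi_i$ is cut by a curve in $\Gamma$.

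The first move is the level argument familiar from the lines case. Because no $\pi_i$ crosses $Z(f)$ (by step~(i)) or $H(f)$ (by step~(ii)), the level $\lambda$ is constant along each $\pi_i$, so $\lambda(v_i^-)=\lambda(v_i^+)$. Telescoping around the cycle gives $\sum_i\bigl(\lambda(v_{i+1}^-)-\lambda(v_i^+)\bigr)=0$, while each summand counts transversal upward crossings of $Z(f)$ by the corresponding jump and is therefore nonnegative. Hence no jump crosses $Z(f)$, and the entire path $\pi$ lies in a single open cell $\sigma$ of $\reals^3\setminus Z(f)$, with each $o_i$ a face of $\Delta_i\setminus Z(f)$ contained in $\sigma$. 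If every $o_i$ is piercing, then $C$ is a cycle in $(\T_\sigma^{(p)},\prec)$, the recursive input at $\sigma$, and the inductive hypothesis cuts some $\pi_i$ in that subproblem.

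The main case, which has no counterpart in the lines setting, is when some $o_j=s_j$ is a slicing face. I would pick $j$ smallest in the ordering from Lemma~\ref{lem:count-slices} among slicing faces used by $\pi$. In the non-disconnecting subcase, step~(iv) has drawn $\Delta_i\cap\bd s_j^+$ on every $\Delta_i\in\T_\sigma$; avoiding it forces each $\pi_i$ and each vertical jump to lie wholly inside or wholly outside the vertical cylinder $s_j^+$, and this inside/outside flag is preserved by the cyclic concatenation. Since $\pi_j\subset s_j\subset s_j^+$ we get $\pi\subset s_j^+$, so every $\pi_i$ $xy$-projects into the projection of $s_j$. The original triangles being pairwise disjoint in $\reals^3$, any triangle whose $xy$-projection overlaps that of $\Delta_j$ is uniformly $z$-above or uniformly $z$-below $\Delta_j$ on the overlap, hence each $\pi_i$ is either entirely above or entirely below $s_j$. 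Because $\pi_{j+1}$ starts at $v_{j+1}^-$ above $s_j$ and $\pi_{j-1}$ ends at $v_{j-1}^+$ below $s_j$, traversing the cycle from $j+1$ back to $j-1$ one meets some consecutive pair $\Delta_i\prec\Delta_{i+1}$ with $\Delta_i$ above and $\Delta_{i+1}$ below $s_j$; evaluating both heights at the $xy$-coordinate of their witnessing jump (which lies in the projection of $s_j$) gives $z_{\Delta_i}>z_{s_j}>z_{\Delta_{i+1}}$, contradicting the jump's requirement $z_{\Delta_i}<z_{\Delta_{i+1}}$.

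The disconnecting subcase is where I expect the main obstacle. No new curves were drawn for $s_j$, so one must argue topologically: by definition of ``disconnecting'' in the ordering, inserting $s_j$ as a membrane splits the then-current subcell of $\sigma\setminus\bigcup_{i<j}s_i$ into two components, placing $v_j^-$ just below and $v_j^+$ just above $s_j$ in different components. Using the minimality of $j$ together with the level argument, one would show that the arcs $\pi_{j+1},\dots,\pi_{j-1}$ and their connecting jumps cannot leave this subcell without crossing $s_j$ or an earlier membrane, yielding the required topological contradiction. The delicate point, which I anticipate as the hardest part, is ruling out that some jump vertically pierces $s_j$ or an earlier slice and so secretly transits between the two sides through the membrane; this is precisely where the careful construction of the ordering in Lemma~\ref{lem:count-slices} and the topological analysis of Appendix~\ref{sec:later,later} are indispensable.
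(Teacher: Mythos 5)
Your overall architecture (bottom-up induction on the recursion tree, the level argument via $Z(f)$ and $H(f)$, the piercing case by induction, and the cylinder argument for non-disconnecting slices) is the paper's, and those parts are essentially right. The genuine gap is the disconnecting-slice case, which you leave open and, worse, defer to the wrong tool: Lemma~\ref{lem:count-slices} and Appendix~\ref{sec:later,later} only supply the $O(D^3)$ \emph{count} of non-disconnecting slices needed for the recurrence; they contribute nothing to correctness and do not rule out jumps piercing membranes. The actual resolution is elementary and needs no topology: take $s$ to be the \emph{first} slice in the insertion order that $\pi$ meets \emph{at all} --- by an arc or by a vertical jump, and whether or not the slice belongs to a triangle of the cycle (the ordering runs over the slices of all triangles). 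Then $\pi$ is contained in the component $\sigma_1$ of the refinement just before $s$ is inserted; $s$ is necessarily disconnecting (the cylinder argument handles non-disconnecting slices also when they are met only by a jump); $s$ splits $\sigma_1$ into $\sigma^+$ and $\sigma^-$, and $\pi$ passes from $\sigma^-$ to $\sigma^+$ at $s$. Returning from $\sigma^+$ to $\sigma^-$ inside $\sigma_1$ would require crossing $s$ from above to below, and no portion of $\pi$ can do this: arcs lie on triangles disjoint from the triangle containing $s$ (or on that triangle itself, hence \emph{on} $s$, not across it), while vertical jumps cross $s$ only upward. So the ``secret transit through the membrane'' you worry about can only go from $\sigma^-$ to $\sigma^+$ and never closes the loop. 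Note also that your minimality is the wrong one: choosing $j$ smallest ``among slicing faces used by $\pi$'' (i.e., faces containing an arc) does not prevent a jump of $\pi$ from piercing an earlier slice, possibly of a triangle outside the cycle, in which case $\pi$ is not contained in the pre-insertion subcell and your separation argument has nothing to separate. Minimality must be taken over all slices intersected by $\pi$.

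Two further omissions: (a) your level argument tacitly assumes every arc meeting $Z(f)$ was cut in step~(i), but step~(i) draws nothing on a triangle fully contained in $Z(f)$; the paper treats this case separately, using the segments $\Delta_i\cap h_{\Delta_1}$ drawn for the plane $h_{\Delta_1}$ of such a triangle and arguing that $\pi$ must cross that plane downward along some arc $\pi_i$, hence cross the drawn segment. (b) The lemma concerns the \emph{clipped} curves $\Gamma$, and your induction never addresses clipping; the paper closes this with Lemma~\ref{clip}, observing that at the highest recursion node where $\pi$ is cut, $\pi$ lies in all ancestral cells, so the cutting point survives the clipping --- your inductive statement should carry this containment as an invariant. (Minor: if the cycle revisits $\Delta_j$, ``each $\pi_i$ entirely above or below $s_j$'' fails for the revisiting arcs, though your jump contradiction adapts easily, as the paper notes parenthetically.)
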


\begin{proof}
Let $C \colon \Delta_1\prec \Delta_2\prec\cdots\prec \Delta_k\prec \Delta_1$ 
be a cycle in $(\T,\prec)$, and let $\pi$ be a loop in $\Pi(C,\T)$ of the form \eqref{eq:pi}.
Let $\T_C=\{\Delta_i \mid i=1,\dots, k \}$ be the set of triangles appearing in~$C$.  
Let $\xi$~be the lowest (farthest from the root) node that satisfies the
following two properties: (a)~$\pi$~is completely contained in the cell 
$\sigma_\xi$ corresponding to $\xi$ in the subdivision formed at its parent node $\zeta$
(recall that $\sigma_{\rm root} = \reals^3$), and 
(b)~all the triangles in $C$ are piercing for $\zeta$ (this condition is 
vacuous when $\xi$ is the root). Such a node $\xi$ always exists and is unique.
Indeed, the root satisfies~(a) and (vacuously)~(b), and the set of nodes satisfying both~(a) and~(b) is easily seen to form a 
(contiguous) path: $\xi$~is the bottommost node of this path.  We will ensure that $\pi$ is cut
when processing either $\xi$ or its parent $\zeta$, depending on which of the
following two situations arises.

Each triangle $\Delta_i$ meets $\sigma_\xi$, so it is either a piercing triangle or a 
slicing triangle for $\sigma_\xi$. We distinguish between two cases: 
(i)~All triangles in $\T_C$ pierce $\sigma_\xi$.
(ii)~At least one of these triangles slices $\sigma_\xi$. 

Consider first the situation in case (ii). Assume that $\xi$ is not the root (the
analysis only gets simpler when $\xi$ is the root), and let $\zeta$ denote its parent.
In this case we do not consider the decomposition of $\xi$,
but only the decomposition done at $\zeta$.
The strategy of the proof is to show that, in case (ii), one of the subpaths $\pi_i$
of $\pi$ is cut by one of the curves drawn while processing $\zeta$.
To carry out this strategy, we note that $\pi$ cannot be fully contained in 
any (open three-dimensional) cell $\nu$ of the vertical decomposition
$\VV = \VV(Z(F_\zeta))$ within $\sigma_\xi$, because then each triangle 
in $\T_C$ that slices $\xi$ also slices $\nu$ (and there is at least one such
triangle, by assumption), so $\pi$ visits this triangle in $\nu$,
but Lemma~\ref{lem:no-disconnect-really} asserts 
that this is impossible. It therefore must be the case that either $\pi$ meets 
an open three-dimensional cell~$\nu$ of $\VV$ but is not fully contained in~$\nu$, 
or $\pi$ is fully contained in the boundary portion~$\VV^{(2)}$ of~$\VV$.
In the latter case, all of the subcurves $\pi_i$ of $\pi$ lie in $\VV^{(2)}$ and therefore $\pi$ is cut, by Lemma~\ref{lemma:pi-i} applied to $\zeta$.
Hence $\pi$ must meet $\nu$ without being contained in it, and without any of its subpaths~$\pi_i$ meeting $\VV^{(2)}$, 
which is implied by another application of Lemma~\ref{lemma:pi-i}.  This means that one of the open vertical jumps on $\pi$ must meet $\VV^{(2)}$
without meeting $Z(f)$. (If it did meet $Z(f)$, Lemma~\ref{lemma:jump} would imply that $\pi$ is cut.)

We now argue that such a jump cannot exist: By assumption it has to meet an open cell of $\nu$ and therefore cannot be fully contained in a vertical feature of $\VV^{(2)}$.  So a point at which it enters or leaves $\VV^{(2)}$ must belong to a non-vertical feature of $\VV^{(2)}$ and those are fully contained in $Z(f)$, leading to a contradiction.

To recap, in case~(ii) $\pi$ has been cut during the nonrecursive processing of 
the parent $\zeta$ of~$\xi$. (Note that case (ii) cannot arise when $\xi$ is the root.)

Let us now consider case (i), in which all the triangles of $\T_C$ pierce $\sigma_\xi$.
This is where we do consider the partitioning at $\xi$.
If $\xi$ is a leaf, the claim holds because the BSP constructed at $\xi$ eliminates 
all cycles among the triangles of $\T_C$ (in the sense of Lemma~\ref{cutpaths}).  
To make this argument rigorous, we need to bear in mind that we only retain the
portions of these cuts within $\sigma_\xi$. However, since $\pi$ is fully
contained in $\sigma_\xi$, any point at which an unclipped cutting segment meets
(some subpath $\pi_i$ of) $\pi$ necessarily lies in $\sigma_\xi$, so it lies
on the (retained) clipped portion of that segment.

If $\xi$ is not a leaf, we constructed a partitioning polynomial $f_\xi$ for $\T_\xi^{(p)}$,
and used $F_\xi = F_\zeta f_\xi$ (where $\zeta$ is the parent of $\xi$, or else, 
when $\xi$ is the root, $F_\zeta = 1$) to partition $\sigma_\xi$ into subcells.
If $\pi$ is not fully contained in any subcell of $\sigma_\xi$, then it has 
to meet (and possibly partially overlap) $Z(f_\xi)$. If $\pi$ meets $Z(f_\xi)$ at a point or points that lie on one of the subpaths $\pi_i$, then $\pi$
is cut (in step~(i)) (in the case of overlap, all or some subarc of $\pi_i$ might be removed).
The only remaining case is that some open vertical jump,
say $v_1^-v_2^+$, of $\pi$ meets $Z(f_\xi)$. The analysis is then entirely analogous to that of the proof of Lemma~\ref{lemma:jump},
and implies that $\pi$ is cut in this case too.

Finally, the case where $\pi$ is fully contained in some subcell $\sigma_{\xi'}$
of $\sigma_\xi$ is impossible, as it contradicts the definition of $\xi$ as the
lowest node satisfying both properties (a) and (b) at the beginning of the proof
(namely, that $\pi$ is contained in the cell and that all its triangles pierce
the parent cell). 

Having covered all possible cases, the lemma follows.
\end{proof}

This finishes the proof of correctness of our procedure. 
We still need to fill two gaps:
(i)~We need an upper bound on $|\Gamma|$, the number of clipped curves that
the procedure generates. 
(ii)~We need to cut each triangle into pieces of constant description complexity and control their number.
We now proceed to describe each of these steps in detail.

\subsection{Bounding the number of curves}
\label{sec:number}

Consider the situation at some recursion node $\xi$, and note that the clipped
curves generated at $\xi$ are obtained essentially in two stages.
Using an upper bound that is certainly too crude, but suffices for our purposes,
we first generate on each triangle of $\T$, in steps (i)--(iii), up to 
$O(D^h\log^4n)$ algebraic curves, each of degree at most $O(D^2)$. 
Then we intersect these curves with the closure of $\sigma_\xi$
to obtain their clipped versions. Ignore for the moment the second stage,
and define $\chi(\T)$, for $\T = \T_\xi$, to be the maximum number of
(unclipped) curves that our procedure generates on the triangles of $\T$, 
for the fixed choice of $D$ that we use throughout the recursion.\footnote{%
  Note that the actual number of curves depends on the partitioning polynomials 
  constructed throughout the recursion, and $\chi(\T)$ maximizes this over all 
  possible choices of partitioning polynomials of degree at most~$D$.}
Put $\chi(n) \coloneqq \max_{|\T|=n} \chi(\T)$, where the maximum is taken 
over all collections~$\T$ of $n$ non-vertical pairwise disjoint relatively 
open triangles in~$\reals^3$. Then $\chi(\T)$ satisfies the following 
recurrence relation (for $|\T|> D^2/c$)
\[
\chi(\T) \le bD^3\chi(c|\T|/D^2) + O(|\T| D^{h}\log^4|\T|) ,
\]
where $b$, $c$, and $h$ are suitable absolute constants. The overhead term 
$O(|\T|D^{h}\log^4|\T|)$ comes from the (crude) bound on the number of wall shadows, 
$O(D^h\log^4|\T|)$ per triangle, drawn in step~(iii) of the construction, which
dominates the number of all other curves non-recursively constructed at the present node.

Maximizing over $\T$ produces the recurrence
\[
  \chi(n) \le
  \begin{cases*}
    bD^3\chi(cn/D^2) + O(nD^{h}\log^4n) , & for $n>D^2/c$ \\
    O(D^4), & for $n\le D^2/c$ .
  \end{cases*}
\]
As is easily verified, the solution of this recurrence is 
$\chi(n) = O(n^{3/2+\eps})$, for any $\eps>0$, provided that we choose~$D$ so as to satisfy 
\[
  D^{2\eps} \ge 2bc^{3/2+\eps}.
\]
That is, when $\eps$ is prescribed, we need to choose $D=2^{\Theta(1/\eps)}$, with a 
suitable constant of proportionality. Conversely, with $D$ as the specified parameter 
(that is, with an explicit control over the degree of the curves that we are willing to draw), 
we have $\eps = O(1/\log D)$.

\subsection{The complexity of the arrangements of the clipped curves}
\label{sec:drawing}

To recap, the various steps of the construction generate a collection of curves 
on the triangles. Altogether $O(D^h\log^4n)$ curves, each of degree at most $O(D^2)$,
are generated for each piercing triangle at each recursive level, the majority 
of which are those drawn in step~(iii). However, for each curve $\gamma$ we 
retain only its portion within the cell at which $\gamma$ was generated.

Upon termination of the entire recursive process, we take each triangle $\Delta\in\T$, 
and consider the planar map~$M_\Delta$ formed on~$\Delta$ by the hierarchy of curves
constructed for $\Delta$. That is, we take each curve~$\gamma$, generated at some 
recursive node~$\xi$ where $\Delta$ was a piercing triangle, clip $\gamma$
to within the cell $\sigma_\xi$, and draw only the clipped portion
$\gamma\cap\sigma_\xi$; we repeat this operation for all triangles $\Delta$
and all recursive steps~$\xi$.
 
Each vertex of $M_\Delta$ is either (a) an endpoint of a connected component of 
the clipped portion of some curve~$\gamma$, or (b) an intersection point between two 
(clipped) curves $\gamma$, $\gamma'$, such that either (b.i) both arcs are generated at
the same recursive step, within the same cell $\sigma$, or (b.ii) up to a swap between the arcs,
$\gamma$, $\gamma'$ are generated within two respective cells $\sigma$, $\sigma'$, such that
the step that generated~$\sigma'$ is a proper ancestor of the step that generated~$\sigma$.
These properties follow easily from the hierarchical nature of our drawings.

The number of clipped connected subarcs, over all the triangles, is at most the number of unclipped
curves, which we have shown to be $O(n^{3/2+\eps})$, with a suitable constant of 
proportionality, plus the number of cuts
that the clipping creates. Any such cut, of some curve $\gamma$ generated at some step $\xi$, 
occurs where $\gamma$ crosses the boundary of $\sigma_\xi$, and, in particular, at a point of\footnote{%
  In general, $\gamma$ crosses several cells into which $\sigma_\xi$ is split by $Z(f_\xi)$, but these
  points are not considered as endpoints of subarcs of $\gamma$.}
$\gamma\cap Z(F_\xi)$. Since $\gamma$ is a planar algebraic curve of 
degree at most $O(D^2)$, and $Z(F_\xi)$ is
an algebraic surface of degree $O(D\log_D n)$, it follows from B\'ezout's theorem that the
number of pieces into which $\gamma$ is cut is $O(D^3\log_D n)$.

The exceptions are endpoints of curves that lie on the edges of the corresponding triangles;
we may ignore these vertices, as we have only $O(D^2)$ such points for each of the curves 
that we draw, as is easily checked.

We next bound the number of intersection points of clipped arcs with other (clipped) 
arcs constructed at (proper and improper) ancestral recursive steps.
For each arc $\gamma$, formed along some triangle~$\Delta$, within a cell~$\sigma$ at 
some recursive step, the number of the ancestral cells of $\sigma$ is $O(\log_D n)$, 
and each of them generates on $\Delta$ up to $O(D^h\log^4n)$ curves of degree at most $O(D^2)$. 
For the present argument, treat these curves as drawn in their entirety---this will only 
increase the number of intersection points on $\gamma$. Since $\gamma$ is one of these curves, 
the number of intersection points of $\gamma$ with any other curve is $O(D^4)$, which is a 
consequence of B\'ezout's theorem.  It follows that the number of vertices that can be formed 
along $\gamma$ is at most $O(D^{h+4}\log^5_D n)$, which is quite possibly a gross overestimate, 
but we do not attempt to optimize it.\footnote{%
  This slack is indeed quite generous, but it only applies to curves generated at a pair 
  of nodes, one of which is an ancestor of the other. What we do not want to pay for
  are intersections between curves generated at two unrelated nodes.}

Adding these two bounds, and multiplying by the number of curves, as provided in 
Section~\ref{sec:number}, we conclude that the overall complexity of 
the maps $M_\Delta$, over all triangles $\Delta$, is 
\[
  O(D^{h+4}\log^5_D n) \cdot O(n^{3/2+\eps}) , 
\]
where, as we recall, the prespecified $\eps>0$ can be chosen arbitrarily small, and where
$D=2^{\Theta(1/\eps)}$, with a suitable constant of proportionality.
It then follows that, by slightly increasing~$\eps$, but keeping
it sufficiently small, we can still write the bound as
$O(n^{3/2+\eps})$, with a constant of proportionality of the form $2^{\Theta(1/\eps)}$. 

\begin{remark}
If we care to optimize the resulting bound in terms of $\eps$, we should set 
$\eps=\frac{c}{\sqrt{\log n}}$, for a suitable absolute constant $c>0$, to obtain 
a bound of the form $n^{3/2}\cdot2^{O(\sqrt{\log n})}$. Of course, for this we would 
have to draw curves of degree $D=2^{O(\sqrt{\log n})}$, which are not of constant complexity.  
Our choice of a constant $D$ increases the bound on the number of curves and the complexity
of their arrangement, but ensures that the curves have constant degree. 
\end{remark}

\subsection{Final decomposition into pseudo-trapezoids}
\label{sec:pieces}

Finally, we take the planar map $M_\Delta$, for each triangle $\Delta$, and 
decompose it into regions of constant description complexity, by constructing 
the \emph{trapezoidal decomposition} \cite{4M-book} of $M_\Delta$ in some fixed, 
but arbitrarily chosen ``vertical'' direction within $\Delta$.
Each resulting piece is a ``pseudo-trapezoid,'' with (at most) two vertical sides, 
and ``top'' and ``bottom'' parts, each consisting of a monotone subarc of one 
of the curves we have drawn on~$\Delta$, and thus having degree at most $O(D^2)$. 

The number of trapezoids is proportional to the complexity of $M_\Delta$, 
which in turn is proportional to the number of its vertices and the number 
of points at which the drawn curves have vertical tangents. As the curves have 
degree~$O(D^2)$, each curve can have at most~$O(D^4)$ such tangency points, by Harnack's theorem.

Using the analysis from the preceding subsection, this brings us to the main result of the paper.
\begin{theorem} \label{thm:main}
Let $\T$ be a collection of $n$ pairwise disjoint non-vertical relatively open triangles in $\reals^3$.
Then, for any prescribed $\eps>0$, we can cut the triangles of $\T$ into $O(n^{3/2+\eps})$
pseudo-trapezoids, bounded by algebraic arcs of constant maximum degree $\delta=2^{\Theta(1/\eps)}$,
so that the depth relation among these pseudo-trapezoids is acyclic; here the constant of 
proportionality (and $\delta$) depend on $\eps$.
\end{theorem}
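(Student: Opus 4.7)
The plan is to assemble the ingredients developed in Sections~\ref{sec:cutting}--\ref{sec:drawing}. First I would run the recursive cutting procedure of Section~\ref{sec:cutting}, with the degree parameter set to $D=2^{\Theta(1/\eps)}$, chosen so that the recurrence
\[
\chi(n) \le bD^3\chi(3cn/D^2) + O(nD^3)
\]
of Section~2.3 resolves to $\chi(n)=O(n^{3/2+\eps})$ unclipped curves. By Lemma~\ref{tricutok}, drawing the clipped collection $\Gamma=\bigcup_\Delta \Gamma_\Delta$ on the triangles guarantees that, in the notation of Lemma~\ref{cutpaths}, every cyclic witness path of every cycle in $(\T,\prec)$ is cut; hence the depth relation among the faces of $\bigcup_\Delta \A(\Gamma_\Delta)$ is acyclic. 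This takes care of correctness.

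Next I would pass from curves to pseudo-trapezoids. For each $\Delta\in\T$, form the planar map $M_\Delta$ on $\Delta$ induced by the clipped arcs $\Gamma_\Delta$, and compute its trapezoidal decomposition in a generic ``vertical'' direction within $\Delta$, as in Section~\ref{sec:pieces}. The resulting pieces are pseudo-trapezoids with at most two straight vertical sides and top/bottom arcs, each being a monotone piece of a curve of degree $O(D^2)$; since $D$ depends only on $\eps$, each pseudo-trapezoid has constant description complexity, and its bounding arcs have degree at most $\delta=\delta(\eps)$. The number of trapezoids created inside $\Delta$ is proportional to the complexity of $M_\Delta$ plus $O(1)$ per curve for horizontally extreme points, which is dominated by the vertex count of $M_\Delta$.

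To count the total number of pseudo-trapezoids, I would reuse the estimate from Section~\ref{sec:drawing}: every vertex of $M_\Delta$ is either an endpoint of a clipped arc or the intersection of two arcs lying in the same cell or in nested ancestral cells, so by B\'ezout each arc collects at most $O(D^6\log_D n)$ vertices along its length. Summing over the $O(n^{3/2+\eps})$ clipped components gives an overall count of $O(D^6\log_D n)\cdot O(n^{3/2+\eps})$, and since $D$ and the logarithm contribute only constant-in-$n$ and sub-$n^\eps$ factors, a modest inflation of $\eps$ absorbs them, yielding the advertised bound $O(n^{3/2+\eps})$ on the total number of pseudo-trapezoids.

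Finally I would observe that acyclicity of the depth relation is preserved when the faces of $\A(\Gamma_\Delta)$ are refined into pseudo-trapezoids: every pseudo-trapezoid is contained in some face $o\in\OO$, and the relation $\prec$ among pseudo-trapezoids is inherited from the relation among their containing faces, so any hypothetical pseudo-trapezoid cycle would lift to a face cycle, contradicting Lemma~\ref{tricutok}. I expect the only delicate steps to be the bookkeeping of the constants: verifying that the final exponent absorbs the $\polylog n$ and $\mathrm{poly}(D)$ overhead into the same $\eps$ used in the recursion (so that the inequality $D^{2\eps}\ge 2bc^{3/2+\eps}$ of Section~2.3 still holds after this absorption), and checking that the pseudo-trapezoid degree bound $\delta=\delta(\eps)$ accounts for the $O(D^2)$-degree cuts from the curtains $H(f)$ as well as the degree-$D$ cuts from $Z(f)$ and the vertical cylinders erected in step~(iv). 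Neither issue is conceptually hard once one treats $D$ as a fixed constant determined by $\eps$.
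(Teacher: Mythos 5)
Your proposal is correct and follows essentially the same route as the paper: run the recursive cutting procedure with $D=2^{\Theta(1/\eps)}$, use Lemma~\ref{tricutok} for correctness, solve the recurrence for the number of curves, bound the vertices of each map $M_\Delta$ via B\'ezout and the $O(\log_D n)$ ancestral cells, and convert faces to pseudo-trapezoids by trapezoidal decomposition, absorbing the $\mathrm{poly}(D)\cdot\polylog n$ overhead into a slightly larger $\eps$. Your extra remark that acyclicity survives the refinement into pseudo-trapezoids is a valid (and welcome) observation; in the paper it is implicit, since Lemma~\ref{cutpaths} applies verbatim to any connected pieces contained in the triangles.
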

\section{Discussion of recent, related, and future research}
\label{sec:discussion}

In this paper we have essentially settled the long-standing problem of eliminating depth cycles 
in a set of pairwise openly disjoint triangles in $\reals^3$. On the positive side, our solution 
is almost optimal in the worst case, in terms of the number of pieces, as this number is only 
slightly larger than the $\Omega(n^{3/2})$ worst-case lower bound noted in \cite{CEG+}.  
However, a notable disadvantage of our solution is that the cuts are by constant-degree 
algebraic arcs, rather than, ideally, by straight segments. 

One direction for future research is to further tighten the bound, removing
the $\eps$ in the exponent and replacing it by a polylogarithmic factor, 
as in \cite{ArS1dcg} (while keeping the shape of the cut pieces simple).  
In fact, at this point there is no evidence that the correct answer is not simply $\Theta(n^{3/2})$, even for the much simpler case of lines.

As noted above, the BSP partition of \cite{PY} has the stronger property that the depth
relation of the resulting pieces is acyclic with respect to any viewing point or direction.
Our solution does not seem to have this property, so a natural question is whether one 
can cut the triangles into a \emph{subquadratic} number of simple pieces that
have this stronger property, or whether $\Omega(n^2)$ pieces are required, in the worst case,
for this property to hold. 

\paragraph*{A different approach.}
A more recent result of De Berg~\cite{mdb} uses a significantly different approach 
to the problem of cutting triangles into pieces that admit a depth order.  
In his algorithm, the resulting pieces are triangular (the cuts are by straight segments), 
but the bound on the number of pieces produced is only $O(n^{7/4}\polylog n)$, 
significantly higher than what our method produces, albeit with uglier-looking pieces.  
De Berg provides an algorithm whose running time is $O(n^{3.69})$. 

In contrast, at the time when the first version of the paper \cite{triangle-cycles-soda} appeared, our 
algorithm was lacking an effective implementation, because, at that time, there was no available effective
algorithm for constructing the partitioning polynomial of Guth~\cite{Gut}. 

We note that the previous study \cite{ArS1dcg} proposes two other algorithmic approaches 
for computing the cuts in the case of lines: one using the algorithms of 
Har-Peled and Sharir \cite{HPS} or of Solan \cite{So},
and the other using the (slower, but still polynomial, and sharper) approximation algorithm 
of Aronov~et~al.~\cite{ABGM}. 
Unfortunately, neither of these alternative techniques seems so far applicable to the case of triangles.
For the former methods to extend to triangles, one needs an efficient algorithm for testing if a set of, say, triangle fragments has a proper cycle-free depth order, analogous to the algorithm of \cite{dBOS} for segments.
The latter method relies on a close connection between the size of the minimum set of  cuts for breaking cycles among line segments and the size of the minimum set of feedback vertices in a suitable directed graph.  This connection does not appear to have an obvious analogue in the case of triangles.

\paragraph*{Recent progress.}
Two developments took place more recently.
Aronov, Ezra, and Zahl \cite{ArEzZ-soda} have a quadratic-time algorithm for 
effectively constructing a subdivision similar to that promised by 
Proposition~\ref{prop:gut} for an arbitrary set of $n$~bounded-degree curves in $\reals^3$; 
the running time improves to roughly $O(n^{4/3})$ in expectation when applied 
to a set of lines. The subdivision is not purely polynomial, but is a combination 
of a polynomial partitioning and a more traditional cutting (a suitable 
\emph{vertical decomposition}). Because of this feature, the decomposition of~\cite{ArEzZ-soda} 
cannot be used as an immediate drop-in replacement for the one used in this work.

However, even more recently, Agarwal \emph{et al.} \cite{AgArEzZ-poly} constructed a 
general algorithm for effectively computing the partitioning promised by Guth~\cite{Gut} 
in all dimensions and for varieties of any lower dimension; the running time is linear 
in the size of the input. Specializing this algorithm to lines in three dimensions, and 
using it as the drop-in effective and efficient replacement for Proposition~\ref{prop:gut},
yields an $O(n^{3/2+\eps})$-time algorithm for constructing the partitioning in Theorem~\ref{thm:main}. %

\paragraph*{Cutting triangles along curves of degree not bounded by a constant.}
Our bound is slightly larger than that in \cite{ArS1dcg} due to our
choice of a \emph{constant} value, rather than a function of $n$,
for the degree~$D$ of the partitioning polynomials. This is due to our desire to 
partition the triangles into pieces of constant complexity. Choosing for $D$ larger,
non-constant values, such as the value $n^{1/4}$ used in~\cite{ArS1dcg}, would
result in pieces whose shape complexity depends, rather badly, on $n$, not to
mention additional technical problems that arise for such a choice of $D$,
discussed in the following paragraph.

Still, it is interesting to consider the option of optimizing our construction by choosing for~$D$ (and thus also for~$\eps$) a value that depends on $n$
(albeit not as badly as in the preceding paragraph), but then the analysis faces 
a few additional complications. One is that going from the number of curves to 
the number of pieces is not entirely trivial, because the number of endpoints 
and intersections of the curves depends on $D$, so as a result the bound increases 
by some small (but non-constant) factor. Trapezoidation adds another factor that 
depends on the degree of the curves within the triangles. This degree is bounded 
by $O(D^2)$, so the additional factor gained here is also non-constant. 

A more careful examination of the above analysis seems to imply an upper bound on the number 
of pieces produced by our procedure by $n^{3/2}\cdot2^{O(\sqrt{\log n})}$, at the 
cost of using pieces bounded by polynomials of degree $D=2^{O(\sqrt{\log n})}$.
The details needed to make this statement rigorous are left for future work.

\paragraph*{A possible generalization.}
The vertical above/below relation extends naturally to arbitrary pairwise disjoint 
$xy$-monotone algebraic surface patches of constant description complexity, 
with the main difference being that the $\prec$ relation may now contain loops of length 
two, which are impossible if the input consists of pairwise disjoint two-dimensional convex shapes.

Consider a collection of $n$ openly disjoint patches of $xy$-monotone algebraic surfaces, 
each of constant description complexity. Concretely, we assume that each of them is a 
portion of a two-dimensional $xy$-monotone algebraic variety of degree at most $b$, 
bounded by at most $b$ curves of degree at most $b$, for some constant $b>0$.
We believe that our machinery can be applied to eliminate all depth cycles in
such a collection of surface patches, with very few modifications.
The bounded-degree-curve version of Proposition~\ref{prop:gut}, applied to 
the boundaries of the given patches can be used to produce a suitable space partition,
and the rest of the argument, including Lemma~\ref{lemma:new-hope}, should apply in 
this situation more or less verbatim, with various implied constants that depend on $b$.
Verifying and making this claim rigorous is left to future research.

\subsection*{Acknowledgments}
The authors wish to express their gratitude to Saugata Basu for his help in discussing matters of algebra, 
and to Saugata Basu, Sylvain Cappell, Jeff Erickson, and Josh Zahl for some assistance in matters of topology.

\appendix
\section*{Appendix}

In Section~\ref{sec:vertical-decomp}, we recall the definition of a vertical decomposition.
We prove Lemma~\ref{lem:no-disconnect-really} in Section~\ref{sec:where-we-prove-it}.

\section{The vertical decomposition in an arrangement of algebraic surfaces}
\label{sec:vertical-decomp}

In this section, we give, for completeness, a brief description of the notion 
of the \emph{vertical decomposition} of $\reals^3$ with respect to a set $\F$
of $k$ algebraic surfaces, each of degree at most~$D$. 
We follow the definitions in \cite{SA}, and refer the reader also to \cite{collins75} 
for the related notion of a \emph{cylindrical algebraic decomposition}~(\emph{CAD}).
An illustration of the analogous notion of a vertical decomposition in the plane is shown in Figure~\ref{fig:cyl}.
\begin{figure}
  \centering
  \subfigure[]{\includegraphics[page=1]{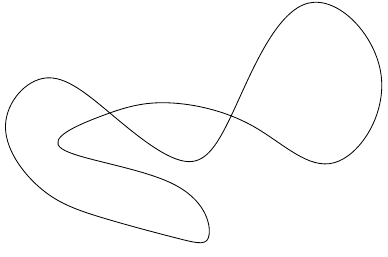}}%
  \qquad%
  \subfigure[]{\includegraphics[page=2]{vv}}%
  \caption{(a) A curve in the plane. (b) The corresponding vertical decomposition, where dots indicate vertices, dashed vertical line segments and rays --- additional vertical edges, and one two-dimensional face is shaded.}
  \label{fig:cyl}
\end{figure}

The \emph{vertical decomposition} $\VV=\VV(\F)$, for a collection $\F$ as above,
is constructed by first drawing a collection $\Gamma$ of curves on the surfaces,
where each curve is either the intersection of two surfaces of $\F$, or is the 
locus of the singular points and the points of $z$-vertical tangency on a single surface.
We then take each curve $\gamma\in\Gamma$ and erect from it a $z$-vertical wall,
which is the union of all the maximal $z$-vertical segments that pass through $\gamma$
and whose relative interiors do not meet any surface of $\F$. That is, we
extend each of the segments up and down until it meets a different point on 
some surface, or else all the way to $z=\pm\infty$.  
The resulting decomposition has the property that each of its 3-cells 
$\nu$ is $xy$-monotone (i.e., it intersects any~$z$-vertical line, if at all, 
in a connected segment), but the $xy$-projection $\nu'$ of $\nu$ need not be 
$x$-monotone, nor even simply connected. In the next step, we form the vertical 
decomposition of the projection $\nu'$ of each first-stage 3-cell $\nu$, 
by drawing a $y$-vertical segment from each singular point and from each 
point of $y$-vertical tangency on the boundary $\bd\nu'$ of~$\nu'$, and 
by extending each such segment up and down, in the $y$-direction, within 
$\nu'$, until it meets another point of $\bd\nu'$, or else all the way to 
$y=\pm\infty$. This yields a decomposition of $\nu'$ into 
\emph{vertical pseudo-trapezoids}. We lift each such pseudo-trapezoid~$\tau'$ in
the $z$-direction (formally, by taking the Cartesian product with~$\reals$) and 
intersect the resulting unbounded prism~$\tau$ with~$\nu$, thereby obtaining a 
\emph{pseudo-prism}~$\tau \cap \nu$. The resulting collection of pseudo-prisms and 
their boundaries constitutes $\VV(\F)$.  For more details, see \cite{SA}.
By construction, every surface in $\F$ is contained in $\VV^{(2)}$.

The decomposition $\VV(\F)$ has the property that each of its (open) 3-cells 
is an open topological 3-ball, which is given by (up to) six inequalities of the 
form $a < x < b$, $f_1(x) < y < g_1(x)$, and $f_2(x,y) < z < g_2(x,y)$, for 
continuous algebraic functions $f_1$, $g_1$, $f_2$, $g_2$, where each of 
$a$, $b$, $f_1$, $g_1$, $f_2$ and $g_2$ may also be $\pm\infty$, in which 
case the corresponding inequality is dropped.

The above description focuses on 3-cells of $\VV=\VV(\F)$ and does not 
specify the exact definition for its 2-, 1-, and 0-faces.  It turns out that, 
for our purposes, such a description is largely unimportant. The 2-skeleton 
$\VV^{(2)}$ of the vertical decomposition, defined as the complement of the 
union of its open 3-cells, is only used in step~(iii) of the curve drawing 
process to create ``wall shadows.'' See the main part of the paper for
the specific $\F$ that we use, and for the complexity analysis of the resulting
decomposition. The curves that we draw are produced by intersecting 
each~$\Delta \in \T$ with the closure of an open cell of~$\VV$, as long as 
the (non-empty) intersection is two-dimensional. 
We then draw the bounding curve of the intersection.

This completes the description of the vertical decomposition.

\section{The proof of Lemma~\ref{lem:no-disconnect-really}}
\label{sec:where-we-prove-it}

In this section we utilize standard notions from topology and algebraic topology,
as documented, e.g., in Spanier \cite{Spanier}. We will need the following elementary fact.
\begin{fact}
  \label{fact:open-is-open}
  Any open set $U \subset \reals^d$ is a
  disjoint union of its path-connected components; each such component is an open set in~$U$.
\end{fact}

\begin{proof}
  Define an equivalence relation $\sim$ on points of $U$, so that $x \sim y$, 
  if there is a continuous path in $U$ from $x$ to $y$. The equivalence classes 
  are called the \emph{path-connected components} (or \emph{path components}, for short) of~$U$. 
  This defines a partition of $U$ into disjoint non-empty subsets, $U =\bigsqcup_j \ U_j$, 
  where each $U_j$ is path connected, while there is no continuous path in~$U$ between 
  points of $U_j$ and~$U_k$, for any~$j\neq k$.

  Fix a path component $U_j$ and let $x \in U_j$. The open set $U$ is a union of open balls
  so there is an open ball, say $B_r(p)$, with center $p$ and radius $r>0$ in~$U$, containing $x$. Since
  $B_r(p)$ is convex and contains $x$, the line segment between $x$ and any point in $B_r(p)$
  is a continuous path in $U$. This proves that $B_r(p) \subset U_j$, so necessarily $U_j$ 
  is a union of open balls of $\reals^d$, and therefore $U_j$  is an open set of $\reals^d$. 
\end{proof}

\paragraph*{Lemma setup.}

We introduce a few constructs and definitions. Fix a positive real number $M >0$. 
Let $a,b \in \reals$ with
$-M < a < b < +M$ and let $f_1,g_1 \colon (a,b) \to \reals$ be continuous functions
satisfying
\[
  -M < f_1(x) < g_1(x) < +M,
\]
for all $x \in (a,b)$.  Define the open bounded set $P \subset \reals^2$ as
\[
  P \coloneqq \{ (x,y) \mid x \in (a,b), f_1(x) < y < g_1(x) \}.
\]
Let $f_2,g_2 \colon P \to \reals$ be continuous functions satisfying
\[
  -M < f_2(x,y) < g_2(x,y) < +M,
\]
for all $(x,y) \in P$.
Now let $\nu$ be the open set in $\reals^3$ defined by
\[
  \nu \coloneqq \{ (x,y,z) \mid  (x,y) \in P, f_2(x,y) < z< g_2(x,y) \}.
\]
Each (bounded open) 3-cell of the vertical decomposition $\VV=\VV(Z(f))$ of $Z(f)$, 
as discussed in Appendix~\ref{sec:vertical-decomp}, is of the above form, for a suitable choice of $M$.

Finally, let $L \colon \reals^2 \to \reals$ be any continuous 
function, and let the non-vertical surface~$\Pi=\Pi(L)$ be the graph of~$L$ over~$\reals^2$, namely
\[
  \Pi \coloneqq  \{(x,y,L(x,y)) \mid (x,y) \in \reals^2 \}.
\]
Assume that $\Pi$ intersects $\nu$.

Note that $(x,y) \rightarrow  (x,y,L(x,y))$ defines a homeomorphism of~$\reals^2$ to $\Pi$. 
In particular, a set is open in $\reals^2$ if and only if its image is open in $\Pi$, 
with the subspace topology inherited from~$\reals^3$, namely the topology whose open sets 
are exactly the intersections of open sets of $\reals^3$ with $\Pi$.
Hence, the set $\nu \cap \Pi$, which is open in the subspace topology of $\Pi$,
is homeomorphic to an open set of~$\reals^2$. Appealing to Fact~\ref{fact:open-is-open},
this open set in $\reals^2$ is a disjoint union of open and path-connected spaces 
(its path components). Consequently, the open set $\nu \cap \Pi$ in $\Pi$ is a 
disjoint union of its open and path-connected components. 
Let $\slice$ be one of these components of $\nu \cap \Pi$. 

\begin{lemma}
  \label{lemma:new-hope}
  $\nu \setminus \slice$  is an open set in $\reals^3$ with exactly two  open path-connected components.
  In particular, any continuous path in $\nu$ connecting points in different components of $\nu \setminus \slice$ must meet $\slice$.
\end{lemma}
\begin{proof}
  Firstly, note that $\Pi$ is closed in $\reals^3$, while $\nu$ is open in $\reals^3$, so $\nu \setminus \Pi$ is open in $\reals^3$.

  Now consider the decomposition of the open set $\nu \cap \Pi$ in $\Pi$ into its chosen 
  path-component~$V$ and the union~$W$ of its remaining path-components; $W$ may be empty.
  By Fact~\ref{fact:open-is-open},
  both~$V$~and~$W$, as  unions of path-components of the open set $\nu \cap \Pi$, are open in the subspace
  topology of $\Pi$. Hence, there is an open set $W^\#$ of $\reals^3$ with $W = W^\# \cap \Pi$.
  In particular, $W^\# \cap \nu$ is an open (in $\reals^3$) subset of $\nu \setminus V$, and the sequence of equalities
  \[
    \nu \setminus V = (\nu \setminus V) \cup ( W^\# \cap \nu)= (\nu \setminus \Pi) \cup ( W^\# \cap \nu)
  \]
  shows that $\nu \setminus V$ is an open set in $\reals^3$, as claimed.

  It remains to show that $\nu \setminus V$ consists of exactly two path components.  
  Then, by definition of path components, any  continuous path in $\nu $ between
  points in different components of~$\nu \setminus V$ must necessarily cross~$V$.

  The set $\nu$ is homeomorphic to the open unit cube $(0,1) \times (0,1) \times (0,1)$ 
  by the homeomorphism $\Psi \colon \reals^3 \to \reals^3$ defined by
  \[
    \Psi(x,y,z) = \left ( \frac{x-a}{b-a}, \ \frac{y-f_1(x)}{g_1(x)-f_1(x)}, \  \frac{z-f_2(x,y)}{g_2(x,y)- f_2(x,y)} \right ).
  \]
  The projection $\rho \colon \reals^3 \rightarrow \reals^2$ defined by $\rho(x,y,z) = (x,y)$ maps
  $\Pi$ homeomorphically to~$\reals^2$. 
  Also, since $V$ is open, path-connected, and connected, so is
  $\hat{V} \coloneqq \rho(V) \subseteq \reals^2$.

  Denote by $T$ the intersection
  \[
  T \coloneqq \rho^{-1}(\hat{V}) \cap \nu = \{ (x,y,z) \mid ( x,y) \in \hat{V}  \text{ and } f_2(x,y) < z < g_2(x,y) \},
  \]
  which is an open set in $\reals^3$ as $\rho^{-1}(\hat{V}) = \hat{V} \times \reals$ and  $\nu$ are open in $\reals^3$.
  Note that for $(x,y) \in \hat{V}$, $(x,y,L(x,y)) $ lies in $V$ and hence in $\nu$. 
  Thus, the inequalities $f_2(x,y) < L(x,y) < g_2(x,y)$ hold for $(x,y) \in \hat{V}$.

  In particular, $T$ is the disjoint union $T  = V \sqcup X \sqcup Y$ with
  \begin{align*}
    X \coloneqq {}&\{(x,y,z) \mid f_2(x,y) < z < L(x,y),\ 
                  (x,y ) \in \hat{V}  \}, \text{ and} \\
    Y \coloneqq {}& \{(x,y,z) \mid  L(x,y) < z < g_2(x,y),\ %
                  (x,y ) \in \hat{V}\}.
  \end{align*}
  $X$ and $Y$ are open sets in $\reals^3$. The spaces $X$, $Y$, and $T$ are path connected, 
  by the following argument: Define $V_1 \subset X, V_2 \subset Y$ by
  \begin{align*}
    V_1 \coloneqq {} &\{(x,y,(L(x,y)+f_2(x,y))/2) \mid (x,y) \in \hat{V} \}, \text{ and}\\
    V_2 \coloneqq {} &\{(x,y,(L(x,y)+ g_2(x,y))/2) \mid (x,y) \in \hat{V} \}.
  \end{align*}
As is easily seen, under the projection $\rho$, 
the sets $V_1$ and $V_2$ each map homeomorphically onto~$\hat{V}$.
Since $\hat{V}$ is open and path connected, so is each of $V_1$ and $V_2$, in the subspace topology.

Consider $X$. For points $(x,y,z), (x',y',z')$ in $X$,
the vertical line segment from $(x,y,z)$ to $R \coloneqq (x,y,(L(x,y)+f_2(x,y))/2)$
is fully contained in $X$ and provides a path from $(x,y,z)$ to the point $R$
of $V_1$. Since $V_1$ is path connected, there is a continuous path from $R$ to 
$S \coloneqq (x',y',(L(x',y')+f_2(x',y'))/2)$ in $V_1$. 
The vertical line segment from $(x',y',z')$ to $S$ provides a continuous  
path in $X$ between these two points. Concatenating these three paths provides
a continuous path in $X$ from $(x,y,z)$ to $(x',y',z')$. This proves that $X$ is path connected.

To prove the path-connectedness of $Y$ (respectively, $T$), we apply the same method, 
replacing $V_1$ by $V_2 \subset Y$ (respectively, by $V$).

Now note that $T\setminus V = X \sqcup Y$ with both sets $X$ and $Y$ 
non-empty and open. Hence, $T$ is not connected, and necessarily not path connected. 
In particular, there is no continuous path in $X \sqcup Y$ from a point of $X$
to a point of $Y$.  This proves that $T \setminus V = X \sqcup Y$ has exactly two path components.

\bigskip

We now turn to homological techniques, using singular homology with integer coefficients.
We assume some basic familiarity of the reader with these techniques, which can be found,
in full detail, e.g., in Spanier \cite{Spanier}.

From the definition of the reduced zero-dimensional singular homology $\tilde{H}_0(X;\ints)$ 
of a topological space~$X$ with integer coefficients, it easily follows that 
$\tilde{H}_0(M;\ints) = \{0\}$ if and only if $M$ is path connected. Additionally, 
$\tilde{H}_0(M;\ints) = \ints$ if and only if $M$ is the disjoint union of exactly 
two path components. In the latter case, the generator of $\tilde{H}_0(M;\ints) = \ints$ 
is the reduced zero chain $1 \cdot p - 1 \cdot q$ with $p$ a point in one path component
and $q$ a point in the other.

In particular, since  $V$, $V_1$, $V_2$, $X$, $Y$, $T$, and $\nu$ are all path 
connected, their reduced singular zero-dimensional homologies with integer coefficients,
$\tilde{H}_0(\cdot; \ints)$, vanish, i.e.,
\[
  \tilde{H}_0(V;\ints) = \tilde{H}_0(V_1;\ints) = \tilde{H}_0(V_2;\ints) = 
  \tilde{H}_0(X;\ints) = \tilde{H}_0(Y;\ints) =\tilde{H}_0(T;\ints)
  =\tilde{H}_0(\nu;\ints) =\{0\}.
\]

We have proven that $T \setminus V$ consists of two path-connected components, 
$X$ and $Y$; in particular, no path in $T$ connects $X$ to $Y$ without visiting $V$.
Equivalently stated, $\tilde{H}_0(X \sqcup Y;\ints) = \ints$, with generator 
$1 \cdot p - 1 \cdot q$ for any $p \in X$ and $q \in Y$. Fix such a choice of points $p$ and $q$.

\bigskip

When a topological space $X_1\cup X_2$ can be written as a union of two open subsets~$X_1$ and~$X_2$ (in $X_1 \cup X_2$), the following \emph{Mayer-Vietoris sequence} is \emph{exact}:
  \begin{equation}
    \label{eq:m-v}
    \tilde{H}_1(X_1 \cup X_2;\ints) \to \tilde{H}_0(X_1 \cap X_2;\ints)
    \to \tilde{H}_0(X_1;\ints) \oplus \tilde{H}_0(X_2;\ints) \to \tilde{H}_0(X_1 \cup X_2;\ints).
  \end{equation}
  See, for example, the more general Theorem~4.6.3 in Spanier \cite[page~188]{Spanier}, 
  for the precise descriptions of the maps in the sequence. We take the open sets $X_1,X_2$ 
  to be $X_1 = T $\ and  $ X_2 = \nu \setminus V$.
  The union $X_1 \cup X_2$ is the open set $\nu$ in $\reals^3$, $X_1,X_2$ are open in $\reals^3$, 
  as observed above, and thus also in $\nu = X_1\cup X_2$, and 
  $X_1 \cap X_2 = T \setminus V = X \sqcup Y$ is also open in $\reals^3$, and so in $\nu$.  
  Therefore, the exact sequence~\eqref{eq:m-v} reads in our case
  \begin{gather*}
    \tilde{H}_1(\nu;\ints) \to \tilde{H}_0(X \sqcup Y;\ints)
    \to \tilde{H}_0(T;\ints) \oplus \tilde{H}_0(\nu \setminus V;\ints) \to \tilde{H}_0(\nu;\ints), \text{\ \ or}\\[0.5ex]
    \{0\} \rightarrow \ints  \rightarrow \{0\} \oplus \tilde{H}_0(\nu \setminus V;\ints)  \rightarrow \{0\}.
  \end{gather*}
  Above $\tilde{H}_1(\cdot;\ints)$ denotes the reduced one-dimensional singular homology 
  with integer coefficients, which, informally, represents non-trivial one-dimensional 
  cycles in~$\nu$. Here we have $\tilde{H}_1(\nu;\ints)=\{0\}$ because $\nu$ is contractible.
  The exactness of the sequence implies the isomorphisms 
  $$
  \ints =  \tilde{H}_0(X \sqcup Y;\ints) \cong \{0\} \oplus H_0(\nu \setminus V;\ints) \cong H_0(\nu \setminus V;\ints) ,
  $$
  where the generator of the group on the left-hand side is precisely the reduced zero chain 
  $1 \cdot p - 1 \cdot q$ mentioned above.
  Therefore $\nu \setminus V$ has exactly two path-connected components $X^\# \supset X$ 
  and $Y^\# \supset Y$, and every path in $\nu$ connecting points in different components 
  must therefore meet $V$.

  By Fact \ref{fact:open-is-open} applied to the open set $\nu \setminus V$ in $\reals^3$,
  the two path-components $X^\#, Y^\#$ are necessarily open. 

  This completes the proof of Lemma~\ref{lemma:new-hope}.
\end{proof}

Finally, we are ready to present the proof of Lemma~\ref{lem:no-disconnect-really}.
\begin{proof}[Proof of Lemma~\ref{lem:no-disconnect-really}]
Let $\nu$ be a bounded 3-cell of the vertical decomposition $\VV=\VV(Z(f))$,
as defined in Appendix~\ref{sec:vertical-decomp}. It is sufficient to consider 
only bounded cells: Indeed, our input triangles are bounded, so we can place 
them inside a sufficiently large box, form the vertical decomposition within 
the box, e.g., by adding the planes supporting the box faces to $Z(f)$, and 
disregard all cells of~$\reals^3 \setminus Z(f)$ lying outside the box, without 
affecting the rest of the argument.

Let $\Delta \in \T$ be one of the triangles that slice $\nu$, and let $h_\Delta$ 
be its supporting plane. For a contradiction, suppose that there exists a loop 
$\pi$ of the form~\eqref{eq:pi} that is fully contained in $\nu$ and visits $\Delta$.  
(Note that, by construction, $\pi$~meets~$\Delta$ a~finite number of times, 
in the sense that $\pi^{-1}(\Delta)$~consists of a~finite number of connected components.)
Let $p$ be a point on $\pi$ in~$\Delta \cap\nu$. Define $L$ to be the linear function 
whose graph $\Pi$ (in the terminology of the proof of Lemma~\ref{lemma:new-hope}) 
is~$h_\Delta$. Let $\slice$ be the path component of~$h_\Delta \cap \nu$ containing $p$; 
since $\Delta$ slices $\nu$, this component coincides with the path component 
of~$\Delta \cap \nu$ containing~$p$.

Lemma~\ref{lemma:new-hope} shows that $\nu \setminus V$ consists of two path components, 
the ``lower''~$X^\#$ and the ``upper''~$Y^\#$, and any path in~$\nu$ passing from one 
component to the other has to meet~$\slice$ between them.  In particular, since $\pi$ 
is a closed loop, it must travel from $X^\#$ to~$Y^\#$ (necessarily via~$V$) as many 
times as it travels from~$Y^\#$ back to~$X^\#$ (again, necessarily via~$V$). 
However, by construction, $\pi$ can only pass from below $\slice \subset \Delta$ 
to above it (which may be witnessed by a subpath $\pi_i$ contained in $\Delta_i=\Delta$  
and the corresponding nearby portions of $\pi$ just before and after $\pi_i$, or 
when one of the upward jumps intersects $\slice$), and not in the opposite direction.

This yields a contradiction unless $\pi$ avoids $V$ entirely, thereby concluding the proof of the lemma.
\end{proof}

\end{document}